\theoremstyle{plain}
\newtheorem{theorem}{Theorem}[section]
\newtheorem{lemma}[theorem]{Lemma}
\newtheorem{proposition}[theorem]{Proposition}
\theoremstyle{remark}
\def \b1{{\bf 1}}
\def\b{{\bf b}}
\def\a{{\bf a}}
\def\A{{\bf A}}
\def\N{{\bf N}}
\def\H{{\bf H}}
\def\M{{\bf M}}
\def\W{{\bf W}}
\def\w{{\bf w}}
\def\U{{\bf U}}
\def\u{{\bf u}}
\def\v{{\bf v}}
\def\V{{\bf V}}
\def\X{{\bf X}}
\def\Y{{\bf Y}}
\def\Z{{\bf Z}}
\def\bOmega{{\bf \Omega}}
\DeclareMathOperator*{\argmin}{arg\,min} 
\newcommand{\supp}{\mathcal{S}}
\newcommand{\ignore}[1]{}{}
\providecommand{\keywords}[1]{\textbf{Keywords}: #1}
\begin{document}

\title{Linear Discriminant Analysis  with High-dimensional  Mixed Variables}

\author{
	\normalsize Binyan Jiang\\
	\normalsize Department of Applied Mathematics, The  Hong Kong Polytechnic University\\
	\normalsize Hong Kong, by.jiang@polyu.edu.hk
	\and
	\normalsize Chenlei Leng\\
	\normalsize Department of Statistics, University of Warwick  \\
	\normalsize United Kindom, C.Leng@warwick.ac.uk
	\and
	\normalsize Cheng  Wang\\
	\normalsize School of Mathematical Sciences, Shanghai Jiao Tong University\\
	\normalsize China, chengwang@sjtu.edu.cn
	\and
	\normalsize Zhongqing Yang\\
	\normalsize Department of Applied Mathematics, The Hong Kong Polytechnic University\\
	\normalsize Hong Kong,  zhongqing.jk.yang@connect.polyu.hk
	\and
	\normalsize Xinyang Yu\\
	\normalsize Department of Applied Mathematics, The Hong Kong Polytechnic University\\
	\normalsize Hong Kong, xinyang.yu@connect.polyu.hk
}
\maketitle		

\begin{abstract}
	Datasets containing both categorical and continuous variables are frequently encountered in many areas. The dimensions of these variables can be very high especially in modern data analysis. Despite the recent progress made in modelling high-dimensional data for continuous variables, there is a scarcity of methods that can deal with a mixed set of variables.  To fill this gap, this paper develops a novel approach for classifying high-dimensional observations with mixed variables. Our framework builds on a location model, in which the distributions of the continuous variables conditional on categorical ones are assumed Gaussian. We overcome the challenge of having to split data into exponentially many cells, or combinations of the categorical variables, by kernel smoothing, and provide new perspectives for its bandwidth choice to ensure an analogue of Bochner's Lemma, which is different to the usual bias-variance tradeoff. We show that the two sets of parameters in our model can be separately estimated and provide penalized likelihood for their estimation.  Results on the estimation accuracy and the misclassification rates are established, and the competitive performance of the proposed classifier is illustrated by extensive simulation and real data studies.
\end{abstract}

\keywords{
	Bayes risk, High dimensional mixed variable, Linear discriminant analysis, Location model, Semiparametric estimation
}


\section{Introduction}
\label{sec:intro}
Consider the problem of classifying very high-dimensional observations into categories. In a great many cases, the datasets can contain a mixed set of variables including discrete and continuous ones, both of which can be high-dimensional while the sample size is small. Examples include:  
\begin{itemize}
	\item 
	In clinical practice, it is common to collect data that come with continuous variables and discrete variables. 
	The dimension of these features can be  high, while the number of patients is relatively small, especially for serious or rare diseases. 
	For example,   
	in the Hepatocellular Carcinoma dataset considered in the real data study, there are   165 patients with 22 continuous variables which are mainly from patients' medical test results,  and 118 binary variables which are mainly indicators of related symptoms and medical histories.  
	
	\item 
	In integrative analysis, the main objective is to combine different datasets for a comprehensive study. One of the possible possibilities is to integrate continuous-type data with discrete-type data. For example, the 
	Breast Cancer Gene Expression Profiles data considered in our analysis consists of 489 mRNA Z-Scores (which are measurements of the relative expression
	of patients' genes to the reference population), and a set of indicators of mutation for 173 genes.  A strong motivation for combining these two datasets is that together they may provide more information about the mortality risk, the main quantity of interest.
	
\end{itemize}	
To deal with these mixed variables, a simple strategy is to treat the categorical variables as continuous ones and apply existing classification methods developed for continuous variables. Such a treatment ignores the nature of categorical variables  and intuitively incurs  loss of information.  Here we present a simple example. 
Consider a two-class classification problem where there are one continuous variable $X$ and one binary categorical variable $U$. Assume that 
the prior for the class label $L$ is balanced, i.e., $P(L=1)=P(L=2)=0.5$, and that $P(U=0)=P(U=1)=0.5$ for both classes. 	For Class 1, suppose  the conditional distribution of $X$ given $U$ satisfies 
\[X|U=0\sim N(-1,1), \quad X|U=1 \sim N(1,1).\] Likewise, for Class 2, assume that    
\[X|U=0 \sim N( 1,1), \quad X|U=1\sim N(-1,1).\] If we simply treat $U$ as a continuous  variable that takes value $0$ or $1$, and seek for the best linear classifier, the misclassification rate for the optimal linear classifier 
will be easily seen as $\frac{1}{2}\Big[\frac{1}{2}+\Phi(-1)\Big]$, which is more than twice of the optimal Bayes misclassification rate $\Phi(-1)$; 
see Section \ref{example} in the Supplementary for more details. 
An immediate message from this simple example is that, in order to obtain a sound classifier, we may have to handle the effects of categorical variables and continuous variable differently,  and seek for ways of capturing their interactions.  
In our setting, the challenge on the need to handle mixed variables is further exasperated by the high-dimensionality of the problem.  

\smallskip
\noindent
\textbf{The setup}. Denote $(\Z,\U)$ as the mixed variable, where $\Z$ is a $p$-dimensional continuous variable and $\U$ is a $d$-dimensional discrete variable. Both $p$ and $d$ are very large. For simplicity, 
we shall assume that all variables in $\U$ are binary. Note that 
variables with more than two categories  can be transformed  to be binary  by introducing a
set of dummy (binary) variables \citep{Krz1993}.
Write the class label as 
$L \in \{1, 2 \}$ and denote the probability function of $(\Z,\U)$ in class $i$ as $f_i(\Z,\U)$ for $i=1, 2$. The Bayes rule, which is the optimal rule achieving the smallest misclassification rate among all discriminant rules, classifies a data point to the first class if and only if
\[\pi_1f_1(\Z, \U)>\pi_2f_2(\Z, \U), \]
where $P(L=i)=\pi_i, i=1, 2$, is the prior probability of an observation coming from class $i$. As an application of the Bayes rule, consider the case where there are no discrete variables. If we assume that observations from class $i$ follow $N(\mu_i, \Sigma)$, a multivariate Gaussian distribution with class-specific mean $\mu_i \in \mathbb{R}^p$, and a common covariance matrix $\Sigma \in \mathbb{R}^{p\times p}$, an application of the Bayes rule gives rise to the familiar linear discriminant analysis (LDA) rule which assigns observation to class $1$ if and only if
\begin{eqnarray}\label{BayesR0}
	(\mu_1  - \mu_2 )^\top \Sigma^{-1}  \left[\Z-\frac{\mu_1+\mu_2 }{2}\right]+\log \frac{\pi_1 }{\pi_2 }>0.
\end{eqnarray}
The Gaussianity assumption of the observations with a common covariance matrix is particularly appealing since the resulting Bayes rule is a simple linear function of the variable with an index $\Sigma^{-1}(\mu_1-\mu_2)$ and an intercept $\log \{{\pi_1 }/{\pi_2 }\}$. Indeed, a widely studied and popular approach in high-dimensional classification, as discussed in previous work below,  is to assume a sparse index that gives rise to the so-called sparse LDA \citep{cai2011direct,  fan2012road, mai2012direct, mai2013note}.

\smallskip
\noindent
\textbf{A semiparametric location model}.  The presence of the discrete variables $\U$ clearly rules out the use of a joint Gaussian distribution for $(\Z, \U)$. To continue to enjoy the attractive properties of the Gaussian distribution for classification, we employ an idea originated in \cite{olkin1961multivariate} by assuming a location model. Specifically, in this model, we  treat the discrete random vector $\U$ as a location or cell, and assume that conditional on $\U$, the continuous random vector $\Z$ follows a location-dependent multivariate normal distribution in that $\Z|\U\sim N(\mu_{i}(\U),\Sigma(\U)), i=1,  2$.  The common discrete variable dependent covariance $\Sigma(\U)$ is inspired by the assumption in LDA. Denote the probability that an observation from class $i$ falls in cell $\U=\u$ as $p_{i}(\u)$ and the posterior probability of $(\Z, \U)$ from class $i$ as $P(i|\Z, \U), i=1, 2$.  Under this location model,
the optimal Bayes rule classifies $(\Z,{ \U})$ into class $1$ if
\begin{eqnarray}\label{BayesR}
	&& \log \frac{P(1|\Z,\U)}{P(2|\Z,\U)}
	=\log\frac{\pi_1f_1(\Z, \U)}{\pi_2f_2(\Z, \U)} \\ 
	&=&[\mu_1 (\U)- \mu_2(\U)]^\top \Sigma^{-1}(\U) \left[\Z-\frac{\mu_1({ \U})+\mu_2(\U)}{2}\right]+\log \frac{\pi_1p_{1}(\U)}{\pi_2p_2(\U)}>0, \nonumber
\end{eqnarray}
and into class 2 otherwise. Denote $\beta(\U):=\Sigma^{-1}(\U) [\mu_1 (\U)- \mu_2(\U)]$, and $\eta(\U):=\log \frac{\pi_1 p_{1}(\U)}{\pi_2 p_2(\U)}$. The classifier in \eqref{BayesR} can be written as
\begin{eqnarray}\label{BayesR2}
	\beta(\U)^\top\left[\Z-\frac{\mu_1({ \U})+\mu_2(\U)}{2}\right]+\eta(\U)>0, 
\end{eqnarray}
which is a functional linear classifier  in $\Z$, with a location-dependent direction $\beta(\U)$ and a  location-dependent intercept $\eta(\U)$. In our setting, $\beta(\u)$ is a $p$-dimensional vector for each fixed $\u$, and $\u$ itself is also high-dimensional. As a result, the dimensionality of the model is extremely high. In the setting considered in this paper 
where $\U$ is a vector of binary variables, 
\vspace{-3mm}
\begin{itemize}
	\item For $\beta(\U)$, we have
	$2^d$ $p$-dimensional vectors to estimate;
	\item For $\eta(\U)$, we have $2^d$ different values to estimate.
\end{itemize}
Thus, even after assuming the location model, our estimation problem  is much more challenging than that of the LDA in which only a scalar intercept and a $p$-dimensional vector need estimating. Given a relative small sample size, there is no hope that either $\beta(\U)$ or $\eta(\U)$ can be reasonably estimated unless some kind of structures are assumed. In this paper, we focus on a general scenario where $\beta(\U)$  is treated as a function which varies smoothly over the location $\U$, and $\eta(\U)$ is modelled by a parametric first order approximation.  Some of the properties regarding the location model (e.g., Proposition \ref{p1}) and our theoretical framework can be adapted to cases where more stringent assumptions are imposed to further simplify the model complexity; see Section \ref{Dis} for discussion.
Because of the dependence of $\beta(\U)$ on $U$, we shall refer to our model as the semiparametric location model.

\smallskip
\noindent
\textbf{Estimation}. Denote the sample in population 1 as $(\U_{1}, \X_1), \ldots, (\U_{n_1}, \X_{n_1})$ and in population 2 as $(\V_{1}, \Y_1), \ldots, (\V_{{n_2}}, \Y_{n_2})$. 
We now summarize the main steps for estimating the parameters based on this sample. First, we observe that the estimation of $\beta(\U)$ and that of $\eta(\U)$ can be made separate, thanks to a key result provided in Proposition \ref{p1}. 
Note that from Proposition \ref{p1} we have:
\[\eta(\u)=
\log \frac{ P(L=1| \U=\u)}{P(L=2|\U=\u)},\]
i.e.,  $\eta(\u)$ is the logit transformation of the 
probability $P(L=1|\U=\u)$. 
Consequently, $\eta(\U)$   can be simply obtained by fitting a logistic regression with $\U$ as covariates, $I(L=1)$ as the response, and $\eta(\u)$ as the discriminant function.  Here $I(\cdot)$ is the indicator function. Note that $\eta(\u) $ as a 
function in $\{0,1\}^d$ can be written as a polynomial of $\u=(u_1,\ldots,u_d)\in \{0,1\}^d$. 
{ In the context we consider in this paper where $d$ (i.e., the dimension of $\U$) grows to infinity, one popular approach to fit this logistic regression is to focus on the lower order terms in $\eta(\U)$. Inspired by the LDA, we focus on their main effects by modelling $\eta(\U)$ as
	\begin{eqnarray}\label{binary3}
		\eta(\u)=A_{0}+\sum_{i=1}^dA_iu_i.
	\end{eqnarray}
	We note that in addition to achieve parsimony of modelling, the approximation using low order terms for binary variables is more meaningful than that in the usual linear model.
	In particular,  \eqref{binary3} would hold when the higher order coefficients in the polynomial expansions of   
	$ \log \pi_1p_1(\u)$ and $ \log \pi_2p_2(\u)$ are the same. 
	Further remarks and discussion on \eqref{binary3} can be found in the  Supplementary, Section \ref{sec51}.
	
	Given this simplification,  to estimate $\eta(\U)$, we only need to estimate  $(A_0, \A)$ with $\A=(A_1, ..., A_d)^\top$ and do it by  minimizing the following penalized logistic loss
	\begin{eqnarray}\label{etaestimation}
		&(\hat{A}_0, \hat{\A})=\underset{A_0\in {\mathbb R}, \A\in {\mathbb R}^{d}}\argmin \frac{1}{n}
		\Big\{ \sum_{i=1}^{n_1}\left[-(A_0+\A^\top\U_i)+\log(1+\exp\{A_0+\A^\top\U_i\}) \right]  \\
		&+\sum_{j=1}^{n_2} \log(1+\exp\{A_0+\A^\top\V_i\})   \Big\}+\lambda_{\eta}|\A|_1,\nonumber
	\end{eqnarray}
	where $\mathbb R=(-\infty, \infty)$, 
	$|\A|_1$ denotes the $\ell_1$ norm of $\A$, and $\lambda_\eta>0$ is a tuning parameter.

	The estimation of $\beta(\U)=\Sigma^{-1}(\U) [\mu_1 (\U)- \mu_2(\U)]$ is more challenging. Our strategy is to estimate $\mu_i(\U)$ and $\Sigma(\U)$ first via kernel smoothing. Since $\U$ is discrete, we employ Hamming distance to measure the closeness of two vectors with discrete values. Theoretically, we handle the diverging dimension of $\U$ by analyzing normalized versions of the kernel weights, and analyze the interplay between the bandwidth and the dimensionality in the kernel smoothing via a novel  analysis on the small ball probability to ensure suitable convergence.  {Different from classical kernel smoothing theory where the bandwidth only affects the estimation bias and variance, the bandwidth here is   crucial for the integrated small ball probability that is the normalizing constant in our case. Interestingly, as indicated by Lemma \ref{Kexp} in the Supplementary, the bandwidth should be large enough to guarantee an analogue of Bochner's Lemma   \citep{bosq2012nonparametric} to hold. 
		As a result,  
		we  establish concentration inequalities for the normalized terms in the kernel estimators, and further show that the misclassification rate of our proposed classifier converges to the optimal Bayes risk under appropriate assumptions. This analysis can be of independent interest. 
	}
	
	Denote the estimate of $\mu_1(\u), \mu_2(\u)$ and $\Sigma(\u)$ as $\hat{\mu}_1 (\u)$, $\hat{\mu}_2(\u)$ and $\hat{\Sigma}(\u)$ respectively. 
	By noting that given $\u$, $\beta(\u)$ is the minimizer of the convex loss function $R(\beta(\u)) := \beta(\u)^{T} \Sigma(\u) \beta(\u) - 2\beta(\u)^{T} [\mu_1(\u) - \mu_2(\u)]$, 
	we estimate $\beta(\u)$ by minimizing the following penalized loss function
	\begin{align}\label{betaestimate}
		\hat{\beta}(\u):=\argmin_{b\in {\mathbb R}^p} ~~b^{T} \hat{\Sigma}(\u) b - 2b^{T} (\hat{\mu}_1 (\u)- \hat{\mu}_2(\u)) + \lambda_\beta \left| b  \right|_1,
	\end{align}
	at each $\u$, where $\lambda_\beta>0$ is a tuning parameter, and $\hat{\mu}_1, \hat{\mu}_2$ and $\hat{\Sigma}$ are kernel estimators of $ {\mu}_1,  {\mu}_2$ and ${\Sigma}$ discussed above and defined in Section \ref{sec:betau}. 
	Because $\beta(\U)$ and $\eta(\U)$ can be estimated separately (c.f. Proposition \ref{p1} below),   $\lambda_\beta$  in \eqref{etaestimation} can be independently chosen   without referencing to the estimation of $\eta(\u)$.  In particular, Proposition \ref{p1} implies that the choice of $\lambda_{\beta}$ can be determined by minimizing the misclassification rate when the intercept $\eta(\U)$ is simply set to zero.  
	
	\smallskip
	\noindent
	\textbf{Our contributions}.   Our methodological contribution is to propose a new semiparametric location model for classifying high-dimensional mixed variables in a unified framework. We show that the classification direction $\beta(\u)$ and the intercept $\eta(\u)$ can be estimated separately, and overcome the curse of dimensionality of estimating these two high-dimensional functions by penalized likelihood. Crucially for estimating $\beta(\u)$, we are the first to leverage traditional kernel smoothing with modern development on small ball probability under high dimensionality which can be of independent interest.
	{ More specifically, although asymptotic properties of kernel smoothing estimators for regression functions of infinite order have been rigorously derived in 
		\cite{hong2016asymptotic}, the case  considered in this paper is very different and more challenging in that high dimensionality appears in both the regression function and the  binary covariates. In particular, we have 
		identified that unlike classical kernel smoothing where the bandwidth is usually chosen to balance the bias and variance of a kernel smoothing estimator, the bandwidth here must be chosen to be large enough to ensure an analogue of Bochner's Lemma to hold for high dimensional discrete variables.
		Built on this, we have further established concentration inequalities for the kernel smoothing estimators, which are essential for obtaining consistency under high dimensionality. To the best of our knowledge, this is the first attempt to establish a theoretical framework to evaluate the concentration behavior of kernel smoothing estimators for high dimensional regression functions with high dimensional independent covariates.
		Lastly, we have integrated all the estimation errors and derived the asymptotic misclassification rate of the proposed semiparametric classifier, from which one gains useful insights on how the estimation errors of the classification direction and the intercept  affect the classification accuracy.   
	}
	
	\smallskip
	\noindent
	\textbf{Prior work}.
	High-dimensional datasets containing both discrete and continuous variables are frequently encountered in the big data era.  
	For discriminant analysis, while there are early works in investigating how to model mixed variables under the traditional fixed dimensional setting, recent research has focused almost exclusively on datasets with high dimensional continuous variables. Methodologies that can effectively take into account both the high-dimensionality and the mixing nature of the datasets are scarce. 
	
	When the dimensionality is fixed, discriminant analysis with discrete and continuous data has been well studied. 
	In addition to the simple strategy by transforming the two different types of variables into one type, as mentioned previously, another approach is to establish different discriminant rules for different types of variables and combine them to obtain a final classifier; see for examples \cite{cochran1961some} and  \cite{xu1992methods}. \cite{krzanowski1975discrimination,krzanowski1980mixtures} first proposed linear discriminant rules based on the location model \eqref{BayesR}. 
	Later, more comprehensive discriminant rules were proposed based on logistic discrimination, kernel estimation and the location model; see for example \cite{aitchison1976multivariate}, \cite{knoke1982discriminant}, \cite{asparoukhov2000non}, \cite{kokonendji2016associated} and the references therein. 
	Variable selection for location model based discriminant rules and further extension to quadratic rules and a nonparametric smoothing version can be found in \cite{daudin1986selection},  \cite{Krz1993}, \cite{Krz1994}, \cite{Krz1995}, \cite{asparoukhov2000non} and \cite{Mahat2007}. Clearly, these approaches developed under fixed dimensionality are not applicable any more to the modern data era where dimensions of the data are high.    In particular, in terms of theoretical analysis, these discriminant rules are either algorithmic without   theoretical justification, or statistically justified under the fixed dimensional assumption.
	
	When the dimensionality is high with discrete variables absent, there is a large growing literature devoted to the study of classification. \cite{bickel2004some} first showed that  using estimates developed under the fixed-dimensionality scenario for high-dimensional problems  gives a classifier equivalent to random guessing in the worst case scenario. For the LDA in \eqref{BayesR0},  there are many approaches proposed to deal with the high-dimensionality. Under suitable sparsity conditions on $\mu_1$,  $\mu_2$ and $\Sigma$, \cite{shao2011sparse} proposed to shrink the entries of their empirical estimates. By assuming that $\Sigma^{-1}(\mu_1-\mu_2)$ is sparse,  several papers proposed to estimate this quantity by minimizing a penalized loss function that constrains its $\ell_1$ norm  \citep{cai2011direct,  fan2012road, mai2012direct, mai2013note}. \cite{mai2019multiclass} further studied a multiclass extension of the LDA. \cite{jiang2018direct} proposed a sparse quadratic discriminant analysis method that allows the within-class covariance matrices to differ.  \cite{jiang2017dynamic} investigated a scenario where the covariance matrices are varying with a fixed-dimensional continuous variable. 
	
	Despite these new developments in high dimensional LDA for model  \eqref{BayesR0}, it is challenging to develop proper estimation procedures for \eqref{BayesR}, when the dimensions of the continuous and discrete variables, namely $p$ and $d$,  are both large. On one hand,  for a given $d$, there will be $2^d$ locations and hence there will be a lot of empty cells unless the sample size grows exponentially in $d$ \citep{hall1981nonparametric}. On the other hand, the means and covariance matrix in \eqref{BayesR} are now functions of the location, i.e., the high dimensional discrete variable. Of all the papers reviewed, \cite{jiang2017dynamic} is the closest to this paper. However, 
	unlike the dynamic LDA in that paper 
	where the index variable is defined on a compact, continuous and finite dimensional space,  the space $\{0,1\}^d$ is irregular and  as $d$ tends to infinity, one would encounter a  small ball probability issue analogous to that occurs in infinite dimensional spaces \citep{hong2016asymptotic}, bringing an extra layer of complexity in establishing theoretical properties.  
	
	\smallskip
	\noindent
	\textbf{Content of the paper}. 	  
	The remainder of this paper is organized as follows. In Section 2, we introduce the separability property of the semiparametric location model, and provide more details on the estimation of its parameters. In Section 3,  we provide consistency results for the estimation of $\beta(\u)$ and $\eta(\u)$, and evaluate the asymptotic misclassification rate of the estimated classifier. In Section 4, 
	we conduct extensive numeric studies on simulated data and seven real datasets to  illustrate the competitive performance of our method, with comparison to some modern approaches. Concluding remarks and further discussions are provided in Section 5. Technical proofs are deferred to the Supplementary.  
	
	\section{Semiparametric Location Model: Estimation}
	In our semiparametric location classifier in \eqref{BayesR2}, $\eta(\u)$ is a linear function of $\u$ and $\beta(\u)$ is a function of the location $\u$. 
	In this section we will first look at the classification problem from the perspective of minimizing the expected misclassification rate, from which it is found that although $\beta(\u)$ and $\eta(\u)$ both appear in the Bayes rule \eqref{BayesR2},
	they can be independently estimated. 	We then focus on discussing the estimation of $\beta(\u)$, as the problem of estimating $\eta(\u)$ has been discussed in the Introduction.
	
	\subsection{Separability of $\beta(\u)$ and $\eta(\u)$ }
	
	For a given location $\U=\u$, consider a general discriminant rule 
	\begin{eqnarray}\label{zeroint}
		D(\b(\u),b_0(\u)):=\b(\u)^\top[\Z-\frac{\mu_1(\u)+\mu_2(\u)}{2}] + b_0(\u),
	\end{eqnarray} 
	which classifies a new observation $(\Z,\u)$ to class 1 if and only if $D(\b(\u),b_0(\u))>0$.
	Under the location model, the misclassification rate of the classifier $D(\b(\u),b_0(\u))$ over all locations can be seen as 
	\begin{align}\label{MR}
		R(D(\b(\u), {b}_0(\u))= \sum_{\u\in \{0,1\}^d}[\pi_1p_1(\u)P_\u\left(2|1 \right) + \pi_2p_2(\u)P_\u\left( 1|2  \right)],
	\end{align}
	where $P_\u(i|j)$ is the conditional misclassification probability of classifying  $\Z$  from class $i$ to class $j$ given the location $\u$. Let $\Phi(\cdot)$ be the cumulative distribution function of the standard normal distribution. Under the Gaussianity assumption we have
	\begin{equation}\label{mr12}
		P_\u(i|j) = \Phi \left( \frac{b_0(\u) -\b(\u)^\top[\mu_i(\u) - \mu_j(\u)]/2     }{\sqrt{\b(\u)^\top \Sigma(\u) \b(\u)}} \right).
	\end{equation}
	The purpose of classification is to seek a classification direction $\b(\u)$ and the corresponding intercept $b_0(\u)$ such that the expected misclassification rate in \eqref{MR} is minimized. Although the estimation of these two arguments are interrelated, we show in the following  proposition that their estimation can be separated. 	
	
	\begin{proposition}\label{p1}
		Assume the location model hold, and let $\beta(\u)$ and $\eta(\u)$ be the optimal classification direction and intercept in the Bayes classifier \eqref{BayesR2}, respectively. Consider $D(b(\U), 0)$, a special case of the general discriminant rule \eqref{zeroint} with a zero intercept: $b_0(\u)=0$. We have 
		\begin{eqnarray*}
			\beta(\u)=\Sigma^{-1}(\u)[\mu_1(\u)-\mu_2(\u)]=\argmin_{\b(\u)\in {\cal D}} ~~E \{R(\b(\u), 0)\},
		\end{eqnarray*}
		where ${\cal D}$ is the set of all functions from $\{0,1\}^d$ to ${\mathbb R}$, and $E$ is the expectation. On the other hand, for the optimal intercept $\eta(\u)$ we have: 
		$\eta(\u)=\log \frac{ P(L=1| \U=\u)}{P(L=2|\U=\u)}$.   
	\end{proposition}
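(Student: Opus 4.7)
The plan is to handle the two assertions of the proposition separately. The formula for $\eta(\u)$ is an immediate consequence of Bayes' theorem: writing $P(L=i,\U=\u)=\pi_i p_i(\u)$ and applying Bayes' rule gives $P(L=i\mid \U=\u) = \pi_i p_i(\u)/P(\U=\u)$, so the marginal $P(\U=\u)$ cancels in the ratio and leaves
\[\log\frac{P(L=1\mid\U=\u)}{P(L=2\mid\U=\u)} = \log\frac{\pi_1 p_1(\u)}{\pi_2 p_2(\u)} = \eta(\u).\]

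For the $\beta$-identity, the key observation is that fixing $b_0(\u)\equiv 0$ equates the two cell-conditional error probabilities. A short computation from \eqref{mr12} under the Gaussian model shows that with $b_0(\u)=0$ both $P_\u(1|2)$ and $P_\u(2|1)$ reduce to $\Phi(-t(\u))$, where
\[t(\u) = \frac{\b(\u)^T[\mu_1(\u)-\mu_2(\u)]}{2\sqrt{\b(\u)^T\Sigma(\u)\b(\u)}},\]
and, after possibly replacing $\b(\u)$ by $-\b(\u)$ (which leaves the classifier equivalent), we may assume $t(\u)\ge 0$. Substituting into \eqref{MR} yields
\[R(\b(\cdot),0) = \sum_{\u\in\{0,1\}^d}[\pi_1 p_1(\u)+\pi_2 p_2(\u)]\,\Phi(-t(\u)),\]
a non-negative weighted sum whose summands are completely decoupled across locations. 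Since $\Phi$ is strictly increasing and the weights are non-negative, minimizing the sum reduces to maximizing $t(\u)$ pointwise at each $\u$.

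At each fixed $\u$, maximizing $\b^T v/\sqrt{\b^T\Sigma\b}$ with $v=\mu_1(\u)-\mu_2(\u)$ and positive definite $\Sigma=\Sigma(\u)$ is the classical generalized Rayleigh-quotient problem, whose maximizer is $\b(\u)\propto\Sigma^{-1}(\u)v$; this can be seen by Cauchy--Schwarz after the change of variable $\b=\Sigma^{-1/2}c$, or by setting the gradient to zero. Because $D(\b,0)$ is invariant under positive rescaling of $\b$, every positive multiple achieves the same minimum of $R$, and the canonical choice $\beta(\u)=\Sigma^{-1}(\u)[\mu_1(\u)-\mu_2(\u)]$ is one such minimizer. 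The only real subtlety is this scale invariance, so the equality $\beta(\u)=\argmin E\{R(\b(\u),0)\}$ should be read modulo a positive scalar at each location; beyond that the argument is elementary, the crucial structural step being that the zero-intercept restriction equates the two conditional error probabilities and thereby decouples the minimization across locations.
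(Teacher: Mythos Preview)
Your argument is correct. Both you and the paper reduce the zero-intercept risk to $\sum_\u[\pi_1p_1(\u)+\pi_2p_2(\u)]\,\Phi(-\Delta(\u)/2)$ with $\Delta(\u)=\b(\u)^T[\mu_1(\u)-\mu_2(\u)]/\sqrt{\b(\u)^T\Sigma(\u)\b(\u)}$, observe the decoupling across cells, and conclude by maximizing $\Delta(\u)$ pointwise; your explicit Cauchy--Schwarz/Rayleigh-quotient identification of the maximizer and your remark on scale invariance are clean additions.

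The one substantive difference is in emphasis. The paper does not go directly to the zero-intercept case; it first fixes an arbitrary direction $\b(\u)$, computes the optimal intercept $\tilde b_0(\u)$ for that direction, and shows that the resulting profiled risk $R(D(\b(\u),\tilde b_0(\u)))$ is strictly decreasing in $\Delta(\u)$ as well. Only then does it note that the zero-intercept risk is also minimized by maximizing $\Delta(\u)$, so the two problems share the same optimal direction. This detour is what actually justifies the ``separability'' narrative surrounding the proposition: it shows that the Bayes-optimal direction (with optimal intercept) coincides with the zero-intercept-optimal direction, and hence that $\beta(\u)$ can be estimated without any reference to $\eta(\u)$. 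Your direct argument proves the proposition exactly as stated but does not by itself establish this coincidence; it is a short extra step worth adding if you want to recover the full content the paper uses downstream.
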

	This proposition  ensures that the estimation $\beta(\u)$ and $\eta(u)$ can be conducted separately. In particular, it indicates that the estimation of $\beta(\u)$ can be conducted  by simply setting $\eta(\u)=0$.

	\subsection{Estimation of $\beta(\u)$} \label{sec:betau}
	To estimate $\beta(\u)$ at any cell $\u$, we will resort to kernel smoothing. Recall that our sample consists of $(\U_{1}, \X_1), \ldots, (\U_{n_1}, \X_{n_1})$ from population 1 and $(\V_{1}, \Y_1), \ldots, (\V_{{n_2}}, \Y_{n_2})$ from population 2. 
	For any $\u_1,\u_2\in \{0,1\}^d$, define the normalized Hamming distance between  $\u_1 $ and $\u_2$ as
	$<\u_1,\u_2>=\frac{|\u_1-\u_2|_1}{ {d}},$
	where $|\cdot|_1$ is the $\ell_1$ norm. 
	Our estimation of the mean and covariance matrix  is based on the following Nadaraya-Watson type local smoothing. 
	For given bandwidths $h_x$ and $h_y$, we estimate $\mu_{1}$ and $\mu_{2}$ as
	\begin{equation*}\label{muhat}
		\hat{\mu}_{1} (\u) = \sum_{j=1}^{n_1} \frac{ \exp\{-(dh_x)^{-1}|\U_{ j} - \u|_1 \}\X_{j}}{ \sum_{j = 1}^{n_1}   \exp\{-(dh_x)^{-1}|\U_{ j} - \u|_1 \}}, ~~ \hat{\mu}_{2} (\u) = \sum_{j=1}^{n_2} \frac{ \exp\{-(dh_y)^{-1}|\V_{ j} - \u|_1 \}\Y_{j}}{ \sum_{j = 1}^{n_2}   \exp\{-(dh_y)^{-1}|\V_{ j} - \u|_1 \}}.  
	\end{equation*}
	Based on these, we shall establish the theory for the kernel smoothing estimators. 
	Note that $\Sigma(\u)=E [\X(\u)\X(\u)^\top]-E \X(\u)(E \X(\u))^\top=E[ \Y(\u)\Y(\u)^\top]-E \Y(\u)(E \Y(\u))^\top$. We estimate $\Sigma(\u)$ as
	$\hat{\Sigma}(\u)   = \frac{n_1}{n} \hat{\Sigma}_1 (\u) + \frac{n_2}{n} \hat{\Sigma}_2(\u),$
	where
	\begin{equation*}\label{Sigmax}
		\hat{\Sigma}_1(\u)  = \frac{\sum_{j=1}^{n_1}\exp\{-(dh_{xx})^{-1}|\U_{ j} - \u|_1 \}\X_{j}\X_{j}^{T}}{\sum_{j=1}^{n_1} \exp\{-(dh_{xx})^{-1}|\U_{ j} - \u|_1 \}} -\hat{\mu}_1(\u)\hat{\mu}_1(\u)^\top, 
	\end{equation*}
	\begin{equation*}
		\hat{\Sigma}_2(\u)  = \frac{\sum_{j=1}^{n_2} \exp\{-(dh_{yy})^{-1}|\V_{ j} - \u|_1\}  \Y_{j}\Y_{j}^{T}}{\sum_{j=1}^{n_2} \exp\{-(dh_{yy})^{-1}|\V_{ j} - \u|_1\} } -\hat{\mu}_2(\u)\hat{\mu}_2(\u)^\top,
	\end{equation*}
	with the   bandwidth parameters $h_{xx}$ and $h_{yy}$  controlling  the smoothness of the estimators. Then $\beta(\u)$ is estimated via the minimization problem in \eqref{betaestimate}.
	
	\section{Theoretical Results}\label{theory}
	
	The following notations will be used. For a $k\times p$ matrix $\M=(M_{ij})_{k\times p}$ we denote the vector $l_\infty$ norm induced matrix norm as $\|M\|_L:=\max_{1\leq i\leq k}\sum_{j=1}^p|M_{ij}|$, and denote $\|\M\|_\infty=\max_{1\leq i\leq k, 1\leq j \leq p} |M_{ij}|$. Write the $j$th component of $\X_i$ as $X_{ij}$ and define $Y_{ij}$ likewise. 
	With some abuse of notations, for a given $\u\in\{0,1\}^d$, we denote $\Delta_\u=\U-\u$ and $N_\u=|\U-\u|_1$, where $\U$ is a random variable with probability mass function $p_1(\U)$ or $p_2(\U)$, depending on whether $\U$ is from class 1 or 2. We
	use $m(\u)$ as a generic notation to denote any of the following conditional mean functions: $E[X_{1i}^{k}|\u]$,
	$ E[Y_{1i}^{k}|\u]$,   $ E[(X_{1i}X_{1j})^{k}|\u]$,  and $E[(Y_{1i}Y_{1j})^2|\u]$ with $k=1$ or $2$, $1\leq i, j\leq p$. We denote 
	$a \asymp b$ if $a =O(b)$ and $b=O(a)$. Throughout this paper, $c,C, C_0, C_1, C_2,\ldots$ refer to some generic constants that may take different values in different places.  
	
	\subsection{Concentration inequalities}\label{ConIneq}
	To study the property of the proposed estimators, we make the following assumptions.  

	\begin{itemize}
		\item[(C1).]  There exists a constant $B\in (0,0.5]$ such that $B \leq d^{-1}EN_\u\leq 1-B$ holds for any $\u\in \{0,1\}^d$.  
		There exist constants $C>0$ and $0\leq \alpha<\frac{1}{2}$ such that for any $\u\in  \{0,1\}^d$, $E(\sum_{i=1}^dW_\u^{(i)})^k\leq Ck!d^{1+\alpha(k-2)}$ for any $k\geq 2$, where $\W_\u=(W^{(1)}_\u,\ldots, W_\u^{(d)})^\top=\Delta_\u-E\Delta_\u$.
		
		\item[(C2).]  Write $n=n_1+n_2$. We assume  that $n_1\asymp n_2$,  $\frac{\log (p+d)}{n}+\frac{\log (p+d)}{d\log^2 n}\rightarrow 0$, and for 
		$h=h_x, h_y, h_{xx}$ or $h_{yy}$, there exists  a
		positive constant  $C>0$   such that $\frac{\log (d+n)}{h^2d}\leq C$.
		

		\item[(C3).]
		
		For any $\u\in \{0,1\}^d$ and $t>0$, denote
		\begin{equation*}\label{smooth2}
			\kappa_\u(t):= \frac{E[m(\U)-m(\u)]\exp\{-(dt)^{-1}N_\u\}}{E\exp\{-(dt)^{-1}N_\u \}},
		\end{equation*}
		and let $\kappa(t)=\sup_{\u \in \{0,1\}^d}\kappa_\u(t)$. 
		We assume that $\kappa(t) \rightarrow 0$ as $t\rightarrow 0$ and $dt\rightarrow \infty$.   
		
		\item[(C4).] There exists a positive constant $M$ such that 
		\begin{align*}
			\sup_{1\leq i\leq p,\u\in \{0,1\}^d}EX^4_{1i}(\u)\leq M<\infty,\quad \sup_{1\leq i\leq p,\u\in \{0,1\}^d}EY^4_{1i}(\u)\leq M<\infty
		\end{align*} 
		There exists a constant $M_\Sigma>1$ such that 
		\[M_\Sigma^{-1}\leq \inf_{ \u\in \{0,1\}^d}\lambda_1(\Sigma(\u))\leq \sup_{ \u\in \{0,1\}^d}\lambda_p(\Sigma(\u))\leq M_\Sigma,\] where $\lambda_1(\Sigma(\u))$ and $\lambda_p(\Sigma(\u))$ are the smallest and largest eigenvalues of $\Sigma(\u)$, respectively.
		%
		
		
	\end{itemize}
	
	Condition (C1) is a regularization condition on the discrete variable $\U$, and is generally true when  the success probability of each element in $\U$ is bounded away from zero and one.  
	Condition (C2) specifies the order of the bandwidth $h$. Unlike classical results in kernel smoothing where $h$ is chosen to balance the bias and variance, our $h$ here has to be large enough to ensure the small ball probability $E\exp\{-(hd)^{-1}|\U-\u|_1\}$ is large enough. Specifically, 
	as a form of Bochner's Lemma,  a commonly applied result in classical kernel smoothing estimation is that    $\frac{1}{h}EK\left(\frac{|\U-\u|_1}{dh}\right)$ would   tend to the density function of $\u$ for a properly chosen kernel function  under some regular conditions \citep{bosq2012nonparametric}. However, such a conclusion fails in our case. More specifically, suppose we use a continuous density to approximate the discrete probability mass function of $\u$ as $d\rightarrow \infty$. With some abuse of notations, let $p_d(\u)$ be the point mass probability of $\u$. The approximated density at location $\u$, following traditional arguments, is given as $f(\u)=\lim_{\epsilon\rightarrow 0^+}\frac{1}{\epsilon}\sum_{\u_\epsilon: d^{-1}|\u_\epsilon-\u|_1\leq \epsilon}p_d(\u_\epsilon)$. Similar to the small ball probability issue in the  \cite{hong2016asymptotic}, such a density relies on the choice of $\epsilon$ and hence is not well defined. On the one hand, as $d$ grows, the point mass function at $\u$ converges to zero in an exponential rate. Such a point mass probability cannot be well estimated unless the sample size also grows exponentially in $d$. To tackle this issue,  other than evaluating the denominator and numerator in the Nadaraya-Watson type estimators directly,
	we establish concentration inequalities for their normalized versions such as $\frac{1}{n_1}\sum_{j = 1}^{n_1}   \frac{\exp\{-(dh_x)^{-1}|\U_{ j} - \u|_1\}}{E\exp\{-(dh_x)^{-1}|\U_{ 1} - \u|_1\}}
	$ and $\frac{1}{n_1}\sum_{j = 1}^{n_1}   \frac{\exp\{-(dh_x)^{-1}|\U_{ j} - \u|_1\}X_{ji}}{E\exp\{-(dh_x)^{-1}|\U_{ 1} - \u|_1\}}
	$. 
	Similar to equation (6) of \cite{hong2016asymptotic}, the normalizing coefficient  $E\exp\{-(hd)^{-1}|\U-\u|_1\}$ can be viewed as an integrated small ball probability.   
	Condition (C3)  quantifies the smoothness of   $m(\u)$.	
	For better understanding, we provide some examples where (C3) is satisfied. 

	\smallskip
	\noindent	
	{\bf Example 1.} {\bf Smoothness on the expected different function on the contour}
	
	\noindent
	Note that for any integer $s$, ${\cal C}_{\u, s}:=\{\U:  N_{\u}=s\}$ defines a contour  with radius $s$ from the center $\u$.
	Let $G_{\u}(s ):=E[m(\U)-m(\u)| N_{\u}=s]$ be the expected difference of $m(\cdot)$ over all the $\U$'s on the contour ${\cal C}_{\u, s}$. (C3) is satisfied if $G_{\u}(s)$ is smooth in the sense that   $\frac{E[G_{\u}(N_{\u})\exp\{-(dt)^{-1}N_\u\}]}{E\exp\{-(dt)^{-1}N_\u \}}=\kappa_{\u} (t)\rightarrow 0$. 
	%
	As an example, suppose   for  any $\u_i\in {\cal C}_{\u, 1}$, we have $m(\u_i)=m(\u)+s_{\u,\u_i}$, where $s_{\u,\u_i}$ is the a   ``signal" generated from $s_{\u,\u_i}=I\{\varepsilon_\u=0\}N(0,   \sigma_d^2)$   for some constant $\tau>0$ and  variance  $\sigma_d^2>0$, and some independent Bernoulli random innovation $\varepsilon_\u$ such that $P(\varepsilon_\u=0)=P(1-\varepsilon_\u=0)=\pi_\u$. The variance $\sigma_d^2$ captures the level of fluctuation of $m(\u)$ near the neighborhood locations, and the point mass probability $\pi_\u$ controls the proportion of neighbors that take the same value as $m(\u)$. Under this setting, we have, when $|\u_i-\u|_1=s$,   $m(\u_i)=m(\u)+O_p(\sqrt{s\pi_\u}\sigma_d)$. Condition (C3) is satisfied when $\sqrt{\pi_\u} \sigma_d\rightarrow 0$  fast enough such that  $G_{\u}(s)=o(1)$ for all $s=1,\ldots, d$. This can be true when, for example, $\sqrt{\pi_\u} \sigma_d=O(d^{-1/2-\tau})$ for some constant $\tau>0$. The term $\sqrt{\pi_\u} \sigma_d$ in this case captures the order of the difference between $m(\u)$ and the expectation $E[m(\U) | \U \in {\cal C}_{\u, 1}]$ over the contour ${\cal C}_{\u, 1}$.  


	\smallskip
	\noindent
	{\bf Example 2.} {\bf  Lipschitz with exponential order }

	\noindent
	For any $\u_1, \u 		\in \{0,1\}^d$, suppose there exists a constant $0\leq c\leq 1$  
	such that,
	\[
	|m(\u_1)-m(\u)|\leq  \kappa_1\exp\left\{\frac{|\u_1-\u|_1^c-EN_\u^c}{(d\log d)^{\frac{c}{2}}}\right\},
	\]
	for some   $\kappa_1\rightarrow 0$. 
	Note that by Jensen inequality we have for any $c\in [0,1]$, $EN_\u^c\leq (EN_\u)^c$. By Kimball's inequality, Lemma \ref{Nu}, Conditions (C1) and (C4),  there exists a large enough constant $C_1>0$ such that
	\begin{eqnarray*}
		\frac{E[m(\U)-m(\u)]\exp\{-(dt)^{-1}N_\u\}}{E\exp\{-(dt)^{-1}N_\u \}} &\le&
		\kappa_1 E\exp\left\{\frac{N_\u^c-E N_\u^c}{(d\log d)^{\frac{c}{2}}}\right\}\\
		&=& \kappa_1 E\exp\left\{\frac{E N_\u^c}{(d\log d)^{\frac{c}{2}}}\cdot\left(\frac{N_\u^c}{EN_\u^r}-1\right)\right\}\\
		&=&O\left(\kappa_1 \exp\left\{ C_1\left(\frac{d}{\log d}\right)^{\frac{c}{2}}\cdot \left(\frac{\log d}{d}\right)^{\frac{c}{2}} \right\}
		\right) \\
		&=&O(\kappa_1). 
	\end{eqnarray*}
	
	\noindent
	{\bf Example 3.} {\bf Centered Lipschitz with exponential order}
	
	\noindent
	Existing literature for estimating the distribution of high dimensional discrete variables sometimes models the probability mass as a function of the centered variable instead \citep{grund1993performance}. We hence consider the following  Lipschitz condition centered at the mean of $N_\u$:
	For any $\u_1, \u 		\in \{0,1\}^d$,  there exist  constants $c\geq 0$, $C>0$  such that,
	\[
	|m(\u_1)-m(\u)|\leq C\kappa_1\exp\left\{\frac{(|\u_1-\u|_1-E N_\u)^c}{(d\log d)^{\frac{c}{2}}}\right\},
	\]
	for some   $\kappa_1\rightarrow 0$. 
	By Kimball's inequality and Lemma \ref{Nu} we have, 
	\begin{eqnarray*}
		\frac{E[m(\U)-m(\u)]\exp\{-(dt)^{-1}N_\u\}}{E\exp\{-(dt)^{-1}N_\u \}} &=&
		\frac{E[m(\U)-m(\u)]\exp\{-(dt)^{-1}(N_\u-EN_\u)\}}{E\exp\{-(dt)^{-1}(N_\u-EN_\u) \}}  \\
		&\leq& C\kappa_1 E\exp\left\{\frac{(N_\u-E N_\u)^r}{(d\log d)^{\frac{r}{2}}}\right\}\\
		&=&O(\kappa_1).
	\end{eqnarray*}
	
	%
	%
	%
	Next we establish concentration inequalities for the weighted estimators of the mean and covariance matrix functions.
	Note that in practice, the sample size in either  the training dataset or the testing dataset can be much smaller than the total locations $2^d$. 
	For simplicity, we shall assume that the region of interest is the ball centered at $\v\in\{0,1\}^d$ with radius $r$: $B_\v(r):=\{\u\in \{0,1\}^d: |\u-\v|_1\leq r\}$.	  
	Let $\kappa(\cdot)$ be defined as in Condition (C3). 
	The following theorems establish uniform consistency for  any $\u$ in the ball $B_\v(r)$. 
	\begin{theorem}\label{thmmu}
		Under  Conditions (C1)-(C4), when $n_1$ and $n_2$ are large enough,  there exist constants $C_1>0$, $C_2>0$ such that for any $\epsilon_n> \kappa(h)$, 
		\begin{align*}
			P\left(  \underset{1 \leq i \leq p,\u\in B_\v(r)}{\sup} \left| \hat\mu_{1i}(\u) - \mu_{1{i}}(\u)\right|  > \epsilon_n \right) 
			\leq C_1pd^r \exp\left\{-C_2n_1(\epsilon_n-\kappa(h))^2\right\},
		\end{align*}
		and
		\begin{align*}
			P\left(  \underset{1 \leq i \leq p,\u\in B_\v(r)}{\sup} \left| \hat\mu_{2i}(\u) - \mu_{2{i}}(\u)\right|  > \epsilon_n \right) \leq C_1pd^r \exp\left\{-C_2n_2(\epsilon_n-\kappa(h))^2\right\}.
		\end{align*}
	\end{theorem}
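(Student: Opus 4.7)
\bigskip
\noindent\textbf{Proof plan for Theorem \ref{thmmu}.}

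The plan is to reduce the uniform control of $\hat\mu_{1i}(\u) - \mu_{1i}(\u)$ to concentration of two normalized empirical averages plus a bias term of size at most $\kappa(h_x)$, and then apply Bernstein's inequality followed by a union bound. Write the kernel weight as $e_j(\u) := \exp\{-(dh_x)^{-1}|\U_j - \u|_1\}$ and denote $\tau(\u) := E e_1(\u)$, the integrated small ball probability. Introduce the normalized quantities
\begin{equation*}
S_n(\u) := \frac{1}{n_1}\sum_{j=1}^{n_1} \frac{e_j(\u)}{\tau(\u)}, \qquad
T_n^{(i)}(\u) := \frac{1}{n_1}\sum_{j=1}^{n_1} \frac{e_j(\u) X_{ji}}{\tau(\u)},
\end{equation*}
so that $\hat\mu_{1i}(\u) = T_n^{(i)}(\u)/S_n(\u)$ and $E S_n(\u) = 1$. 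Using the elementary identity
\begin{equation*}
\hat\mu_{1i}(\u) - \mu_{1i}(\u) = \frac{[T_n^{(i)}(\u) - \mu_{1i}(\u) S_n(\u)] }{S_n(\u)},
\end{equation*}
I would work on the intersection of the ``good'' events $\{|S_n(\u)-1|\le 1/2\}$ and $\{|T_n^{(i)}(\u) - \mu_{1i}(\u) S_n(\u)|\le \epsilon_n/2\}$, so that the denominator is bounded away from zero.

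The first main step is to dissect the numerator into a deterministic bias and a mean-zero fluctuation. Conditioning on $\U_1$ gives
\begin{equation*}
E\!\left[\frac{e_1(\u)(X_{1i} - \mu_{1i}(\u))}{\tau(\u)}\right]
= \frac{E[\,e_1(\u)(\mu_{1i}(\U_1) - \mu_{1i}(\u))\,]}{\tau(\u)} = \kappa_{\u,i}(h_x),
\end{equation*}
which by Condition (C3) is uniformly bounded in modulus by $\kappa(h_x)$. Hence the centered numerator $T_n^{(i)}(\u) - \mu_{1i}(\u) S_n(\u) - \kappa_{\u,i}(h_x)$ is an average of i.i.d.\ mean-zero random variables, and the bias absorbs the $\kappa(h)$-shift demanded in the statement (this is why the bound is stated for $\epsilon_n > \kappa(h)$).

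The second step, and the crucial technical one, is to apply a Bernstein-type inequality to the normalized averages. I need moment bounds of the form $E\{e_1(\u)/\tau(\u)\}^k \le C\, k!\, M^{k-2}$ and $E\{e_1(\u)(X_{1i}-\mu_{1i}(\u))/\tau(\u)\}^k \le C\, k!\, M^{k-2}$. The key nontrivial ingredient is lower-bounding $\tau(\u)$ and upper-bounding $E e_1^k(\u)$; this is precisely what the analogue of Bochner's Lemma (Lemma \ref{Kexp} in the appendix) supplies under Conditions (C1)--(C2), exploiting the fact that $h^2 d \gtrsim \log(d+n)$. The moment bound on the numerator summand then follows from Cauchy--Schwarz combined with the fourth-moment bound in (C4). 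Standard Bernstein then yields $P(|S_n(\u)-1|>t) + P(|T_n^{(i)}(\u)-\mu_{1i}(\u)S_n(\u) - \kappa_{\u,i}(h_x)| > t) \le C\exp\{-C' n_1 t^2\}$ for $t$ not too large.

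The final step is a union bound over $i \in \{1,\ldots,p\}$ and $\u \in B_\v(r)$. Since $|B_\v(r)| \le \sum_{k=0}^{r}\binom{d}{k} \le (d+1)^r$, the factor $pd^r$ in the stated bound arises directly from this union bound, with $(\epsilon_n-\kappa(h))^2$ inside the exponent reflecting the bias subtraction above. The analogous argument applies verbatim to $\hat\mu_{2i}(\u)$ with $n_1$ replaced by $n_2$. The main obstacle is the moment control of the normalized weights $e_1(\u)/\tau(\u)$: without the high-dimensional Bochner-type result that keeps $E e_1^2(\u)/\tau(\u)^2$ bounded by a harmless constant, the variance proxy in Bernstein would blow up with $d$ and the concentration rate $n_1(\epsilon_n-\kappa(h))^2$ could not be achieved.
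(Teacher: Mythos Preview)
Your proposal is essentially the paper's own argument: normalize by the integrated small-ball probability $\tau(\u)$, separate the bias $\kappa_{\u,i}(h_x)$ from the mean-zero fluctuation, control the denominator $S_n(\u)$ via what the paper packages as Lemma~\ref{kernel}, and finish with a union bound over $p$ coordinates and $|B_\v(r)|\le d^r/r!$ cells.

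The one place where your write-up diverges slightly is the concentration step. The paper does not verify the full Bernstein moment condition $E\{e_1(\u)/\tau(\u)\}^k\le Ck!M^{k-2}$ for all $k$; instead it uses the elementary bound $e^x\le 1+x+x^2$ for small $x$ to get $E\exp\{t W_{ji}(\u)\}\le 1+t^2 EW_{ji}(\u)^2$ when $t/n_1$ is small, so only the \emph{second} moment of $e_1(\u)/\tau(\u)$ needs to be bounded. That second moment is exactly what Lemma~\ref{Kexp} (your ``Bochner analogue'') delivers under (C1)--(C2). If you try to push the full Bernstein moment hierarchy, note that $E e_1^k(\u)/\tau(\u)^k$ corresponds to applying Lemma~\ref{Kexp} at bandwidth $h/k$, which requires $k^2\log(d+n)/(h^2d)$ to stay bounded---not guaranteed for large $k$ under (C2). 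The paper's second-moment-only Chernoff route sidesteps this; your argument goes through once you make that small adjustment.
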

	
	Similarly, we can show the following.
	\begin{theorem}\label{thmSig}
		Under  Conditions (C1)-(C4), when $n_1$ and $n_2$ are large enough,  there exist constant $C_1>0$, $C_2>0$ such that for any $\epsilon_n> \kappa(h)$,
		\begin{align*}
			P\left(  \underset{ \u\in B_\v(r)}{\sup} \left\| \hat\Sigma(\u) - \Sigma(\u)\right\|_\infty  > \epsilon_n \right) \leq C_1p^2d^r \exp\left\{-C_2n (\epsilon_n-\kappa(h))^2\right\}.
		\end{align*}
	\end{theorem}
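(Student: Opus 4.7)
The plan is to mirror the proof of Theorem \ref{thmmu} via the decomposition
\begin{equation*}
\hat\Sigma(\u) - \Sigma(\u) = \frac{n_1}{n}[\hat\Sigma_1(\u) - \Sigma(\u)] + \frac{n_2}{n}[\hat\Sigma_2(\u) - \Sigma(\u)],
\end{equation*}
and, for each class, the further split $\hat\Sigma_k(\u) - \Sigma(\u) = [\hat S_k(\u) - S_k(\u)] - [\hat\mu_k(\u)\hat\mu_k(\u)^T - \mu_k(\u)\mu_k(\u)^T]$, where $\hat S_k(\u)$ is the kernel-weighted sample second-moment matrix and $S_k(\u) = \Sigma(\u) + \mu_k(\u)\mu_k(\u)^T = E[\X_1\X_1^T|\u]$ (and analogously for class 2). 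It therefore suffices, entrywise, to bound the raw second-moment deviation and the mean-correction deviation uniformly over $(i,l) \in [p]^2$ and $\u \in B_\v(r)$.

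For the raw second-moment part, observe that the $(i,l)$-entry of $\hat S_1(\u)$ is a Nadaraya-Watson estimator with exactly the same kernel weights as $\hat\mu_1(\u)$, but with $X_{ji}$ replaced by the product $X_{ji}X_{jl}$. Following the strategy of Theorem \ref{thmmu}, normalize numerator and denominator by the integrated small-ball probability $E\exp\{-(dh_{xx})^{-1}N_\u\}$ and write
\begin{equation*}
\hat S_{1,il}(\u) - S_{1,il}(\u) = \frac{\widetilde N_{il}(\u) - S_{1,il}(\u)\,\widetilde D(\u)}{\widetilde D(\u)} + \mathrm{bias}_{il}(\u),
\end{equation*}
where $\widetilde N_{il}$ and $\widetilde D$ are the normalized numerator and denominator. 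The ratio bias equals $\kappa_\u(h_{xx})$ for the choice $m(\u)=E[X_{1i}X_{1l}|\u]$ and is dominated by $\kappa(h)$ via Condition (C3); this produces the $-\kappa(h)$ in the tail exponent. The stochastic fluctuations $|\widetilde N_{il}(\u) - E\widetilde N_{il}(\u)|$ and $|\widetilde D(\u) - 1|$ are controlled by the same Bernstein-type argument as in Theorem \ref{thmmu}: conditional Gaussianity $\X|\U\sim N(\mu(\U),\Sigma(\U))$ with the eigenvalue bound in (C4) makes $X_{ji}X_{jl}$ sub-exponential with uniform parameters, and Cauchy-Schwarz combined with (C4) gives $E(X_{1i}X_{1l})^2\leq M$, so the tail at threshold $\epsilon_n-\kappa(h)$ is $O(\exp\{-C_2 n(\epsilon_n-\kappa(h))^2\})$.

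For the mean-correction term, use the identity $\hat\mu_{k,i}\hat\mu_{k,l} - \mu_{k,i}\mu_{k,l} = (\hat\mu_{k,i}-\mu_{k,i})\hat\mu_{k,l} + \mu_{k,i}(\hat\mu_{k,l}-\mu_{k,l})$, giving entrywise a bound of $\|\hat\mu_k-\mu_k\|_\infty(\|\hat\mu_k\|_\infty+\|\mu_k\|_\infty)$. Condition (C4) together with Jensen yields $\sup_\u\|\mu_k(\u)\|_\infty\leq M^{1/4}$, and on the high-probability event from Theorem \ref{thmmu} the same bound holds for $\|\hat\mu_k(\u)\|_\infty$ up to a constant; Theorem \ref{thmmu} then makes this contribution $O(\epsilon_n)$ uniformly with probability at least $1-C_1 p d^r\exp\{-C_2 n(\epsilon_n-\kappa(h))^2\}$. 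A union bound over the $p^2$ entry pairs and over $|B_\v(r)|=\sum_{k=0}^r\binom{d}{k}=O(d^r)$ locations, after rescaling $\epsilon_n$ by an absolute constant to absorb the factor $3$ from the three pieces, yields the claimed inequality. The main obstacle is checking that the Bernstein argument from Theorem \ref{thmmu} survives the passage from sub-Gaussian $X_{ji}$ to sub-exponential products $X_{ji}X_{jl}$; this is handled by conditioning on $\U_j$ (where $\X_j$ is Gaussian with eigenvalue-bounded covariance) so that the relevant moment generating functions are uniformly controlled on a fixed neighborhood of zero, allowing the same truncation-Bernstein scheme with only a change of constants.
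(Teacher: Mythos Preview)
Your proposal is correct and follows essentially the same route the paper intends: the paper simply states that the proof of Theorem~\ref{thmSig} is ``similar to the proof of Theorem~\ref{thmmu}'' and omits it, and your decomposition into the kernel-weighted second-moment part (handled exactly as $\hat\mu$ with $X_{ji}$ replaced by $X_{ji}X_{jl}$, using $m(\u)=E[X_{1i}X_{1l}\mid\u]$ in Condition~(C3) and the fourth-moment bound in~(C4)) plus the mean-correction part (handled via Theorem~\ref{thmmu}) is precisely what that phrase unpacks to. Your explicit remark about the sub-Gaussian~$\to$~sub-exponential passage for products is, if anything, more careful than the paper's own treatment.
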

	Theorems \ref{thmmu} and  \ref{thmSig} above provide concentration results for $\hat{\mu}_1(\u)$ and $\hat{\mu}_2(\u)$. In particular,    the right hand side of the concentration inequalities  will tend to 0 when we set $\epsilon_n=C\sqrt{\frac{\log (p+d)}{n}}+\kappa(h)$ for some large enough constant $C>0$. The  rate $\sqrt{\frac{\log (p+d)}{n}}$ echoes a classical rate that quantifies its dependence  on the dimension and the sample size, while the  term $\kappa(h)$ is a bias  caused by local smoothing. {It is generally hard to evaluate $\kappa(h)$ unless some strong structural assumptions are imposed for $m(\u)$. Under the classical context, the bandwidth $h$ is usually chosen to obtain a trade-off between the bias and the variance, and hence it is theoretically crucial to know the rate of the bias. However, under the setting that $m(\u)$ is high dimensional, the $h$ that provides the best bias-variance trade-off may not necessarily provide any guarantee for the uniform convergence of the estimator, which is an essential requirement for establishing consistency under high dimensionality.  Alternatively, we suggest setting $h$ to be large enough (as in Condition (C3)) to ensure an analogue of Bochner's Lemma (i.e., Lemma \ref{Kexp} in the Supplementary) to hold, and as a result, concentration results in the above two theorems can be appropriately established. Practically, although it is common to  select the bandwidth by minimizing the mean integrated squared error via cross validation, for classification with high dimensional mixed variables, we have found that it works better to choose $h$ to minimize the misclassification rate, as described in Section 4.   }
	
	\subsection{Consistency of $\hat\beta(\u)$}
	
	\subsubsection{$\ell_1$-penalized estimation}
	Before we introduce the main  results for the estimation of $\beta(\u)$, we
	study the theoretical properties of the solution of the following generic penalized quadratic loss. These results will be used to prove the consistency of $\hat{\beta}(\u)$ later. For a given $\u\in \{0,1\}^d$, let $\hat{\bOmega}(\u)$ and $\hat{\a}$ be consistent estimators of a $p\times p$ parameter matrix $\bOmega(\u)$ and a $p$ dimensional parameter $\a$, respectively.
	For a given   tuning parameter  $\lambda$, we define the $l_1$ penalized estimator of $\b^*(\u):=\bOmega(\u)^{-1}\a(\u)$ as:
	\begin{eqnarray}\label{lemmaQL}
		\hat{\b}(\u)=\argmin_{\b \in R^p}  \frac{1}{2}\b^\top \hat{\bOmega}(\u) \b- \hat{\a}^\top(\u)\b+\lambda |\b|_1.
	\end{eqnarray}
	Denote the support of $\b^*(\u)$ as $\supp_\u=\{i: b^*_{i}(\u)\neq 0 \}$ where $b^*_{i}(\u)$ is the $i$-th element of $\b^*(\u)$.  When there is no ambiguity we shall use $\supp$ instead of $\supp_\u$ in some occasions. For example, we shall use $\b_{\supp}(\u)$ instead of $\b_{\supp_\u }(\u)$ to denote the nonzero subset of $\b(\u)$. The following proposition establishes an upper bound for the estimation error of $\hat{\b}(\u)$ in terms of the estimation accuracy of  $\hat{\a}(\u)$ and  $\hat{\bOmega}(\u)$.
	\begin{proposition}\label{lem4}
		Denote $\epsilon_\u=\|\hat{\a}(\u)-\a(\u)\|_\infty +\|[\hat{\bOmega}(\u)-\bOmega(\u)]\b^*(\u)\|_{\infty}$, and $e_\u=\|\b^*(\u)\|_0\|\bOmega_{\supp,\supp}(\u)^{-1}\|_L$ $ \|\hat{\bOmega}(\u)-\bOmega(\u)\|_\infty$.
		Assume the following inequalities hold:
		\begin{eqnarray*}
			&&\sup_{\u\in B_\v(r) }\{\|\bOmega_{\supp^c,\supp}(\u)\bOmega_{\supp,\supp}(\u)^{-1}\|_L+e_\u\}<1,\\
			&&\sup_{\u\in B_\v(r)}2[1-\|\bOmega_{\supp^c,\supp}(\u)\bOmega_{\supp,\supp}(\u)^{-1}\|_L-2e_\u ]^{-1}\epsilon_\u 
			<\lambda.
		\end{eqnarray*}
		We have, for any $\u\in B_\v(r)$,
		\begin{itemize}
			\item[(i)] $\hat \b_{\supp^c}(\u)={\bf 0}$; 
			\item[(ii)]
			$\|\hat{\b}(\u)-\b^*(\u)\|_\infty \leq   2(1-e_\u )^{-1}\|\bOmega_{\supp,\supp}(\u)^{-1}\|_L\lambda.
			$
		\end{itemize}
	\end{proposition}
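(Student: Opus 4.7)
I would prove Proposition \ref{lem4} by the standard primal--dual witness construction, adapted to the perturbed matrices $\hat{\bOmega}(\u)$ and $\hat{\a}(\u)$. Fix $\u \in B_\v(r)$ and drop it from the notation. Since the objective in \eqref{lemmaQL} is convex (with $\hat{\bOmega}$ symmetric positive semi-definite by construction), $\hat{\b}$ is characterized by the subgradient condition
\[
\hat{\bOmega}\hat{\b} - \hat{\a} + \lambda \hat{\z} = 0, \qquad \hat{\z}_j = \mathrm{sign}(\hat{b}_j) \text{ if } \hat b_j\neq 0,\ |\hat z_j|\le 1 \text{ else.}
\]
The witness candidate $\tilde{\b}$ is defined by $\tilde{\b}_{\supp^c}=0$ and
\[
\tilde{\b}_{\supp} \;=\; \hat{\bOmega}_{\supp,\supp}^{-1}\bigl[\hat{\a}_{\supp} - \lambda\,\mathrm{sign}(\b^*_{\supp})\bigr],
\]
with the dual component on $\supp^c$ solved from the KKT equation. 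The goal is to verify (a) $\|\tilde{\z}_{\supp^c}\|_\infty<1$, which by strict dual feasibility forces the unique minimizer $\hat{\b}$ to coincide with $\tilde{\b}$, hence to satisfy (i), and then (b) to bound $\|\tilde{\b}_{\supp}-\b^*_{\supp}\|_\infty$ for (ii).

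\noindent\textbf{Step 1: invertibility of the restricted Hessian.} Since $\|\hat{\bOmega}_{\supp,\supp}-\bOmega_{\supp,\supp}\|_L \le |\supp|\,\|\hat{\bOmega}-\bOmega\|_\infty$ and $|\supp|=\|\b^*\|_0$, the identity $\hat{\bOmega}_{\supp,\supp} = \bOmega_{\supp,\supp}[I+\bOmega_{\supp,\supp}^{-1}(\hat{\bOmega}_{\supp,\supp}-\bOmega_{\supp,\supp})]$ together with the Neumann series gives, under the assumption $e_\u<1$,
\[
\|\hat{\bOmega}_{\supp,\supp}^{-1}\|_L \;\le\; \frac{\|\bOmega_{\supp,\supp}^{-1}\|_L}{1-e_\u}.
\]

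\noindent\textbf{Step 2: bound on $\|\tilde{\b}_{\supp}-\b^*_{\supp}\|_\infty$.} Using $\bOmega_{\supp,\supp}\b^*_{\supp}=\a_{\supp}$ (since $\b^*_{\supp^c}=0$), write
\[
\hat{\bOmega}_{\supp,\supp}\bigl(\tilde{\b}_{\supp}-\b^*_{\supp}\bigr) = [\hat{\a}_{\supp}-\a_{\supp}] - [\hat{\bOmega}_{\supp,\supp}-\bOmega_{\supp,\supp}]\b^*_{\supp} - \lambda\,\mathrm{sign}(\b^*_{\supp}).
\]
The first two terms have $\|\cdot\|_\infty \le \epsilon_\u$, and Step 1 yields
\[
\|\tilde{\b}_{\supp}-\b^*_{\supp}\|_\infty \;\le\; \frac{\|\bOmega_{\supp,\supp}^{-1}\|_L}{1-e_\u}\,(\epsilon_\u+\lambda).
\]
The second assumption implies $2\epsilon_\u < \lambda(1-\|R\|_L-2e_\u)\le \lambda$, where $R:=\bOmega_{\supp^c,\supp}\bOmega_{\supp,\supp}^{-1}$, hence $\epsilon_\u+\lambda\le \tfrac{3}{2}\lambda \le 2\lambda$, giving the bound in (ii).

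\noindent\textbf{Step 3: strict dual feasibility on $\supp^c$.} Substituting $\tilde{\b}_{\supp}$ into the KKT equation on $\supp^c$,
\[
\lambda\,\tilde{\z}_{\supp^c} \;=\; \hat{\a}_{\supp^c} - \hat{\bOmega}_{\supp^c,\supp}\hat{\bOmega}_{\supp,\supp}^{-1}\hat{\a}_{\supp} + \lambda\,\hat{R}\,\mathrm{sign}(\b^*_{\supp}),
\]
where $\hat{R}:=\hat{\bOmega}_{\supp^c,\supp}\hat{\bOmega}_{\supp,\supp}^{-1}$. Rewriting the first two terms through $\b^*$ (using $\a_{\supp^c}=\bOmega_{\supp^c,\supp}\b^*_{\supp}$) and expanding $\hat{R}=R+[\hat{R}-R]$, a triangle inequality decomposition controls $\|\lambda\tilde{\z}_{\supp^c}\|_\infty$ by a sum of (a) $(1+\|R\|_L)\epsilon_\u$ from replacing $\hat{\a},\hat{\bOmega}$ by $\a,\bOmega$ on the support, (b) $\lambda\|R\|_L$ from the sign term, and (c) a perturbation term of order $\lambda\,e_\u$ arising from $\|\hat{R}-R\|_L$, again through the Neumann expansion of Step 1. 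The two inequalities in the hypothesis are precisely what is needed to combine these into $\|\tilde{\z}_{\supp^c}\|_\infty<1$, establishing (i) and, by strict convexity of the objective on $\mathrm{supp}\,\tilde{\b}$, the equality $\hat{\b}=\tilde{\b}$.

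\noindent\textbf{Main obstacle.} The routine pieces are Steps 1 and 2; the delicate accounting is in Step 3, where one has to bound $\|\hat{R}-R\|_L$ without circularity and propagate the perturbation $\|\hat{\bOmega}-\bOmega\|_\infty$ through both the inverse $\hat{\bOmega}_{\supp,\supp}^{-1}$ and the off-diagonal block $\hat{\bOmega}_{\supp^c,\supp}$, so that the resulting constant matches the coefficient $2[1-\|R\|_L-2e_\u]^{-1}$ in the hypothesis. Careful bookkeeping of which $\epsilon_\u$-type and which $e_\u$-type contributions appear determines whether the cleanly stated second assumption suffices; this is the step where the factor $2e_\u$ (rather than $e_\u$) in the denominator of the hypothesis becomes essential.
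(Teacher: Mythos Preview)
Your primal--dual witness is close in spirit to the paper's argument, but there is a genuine gap on the support block. You define the witness explicitly as $\tilde{\b}_{\supp}=\hat{\bOmega}_{\supp,\supp}^{-1}\bigl[\hat{\a}_{\supp}-\lambda\,\mathrm{sign}(\b^*_{\supp})\bigr]$, i.e.\ you \emph{impose} the dual variable on $\supp$ to be $\mathrm{sign}(\b^*_{\supp})$. For $\tilde{\b}$ to be a KKT point of \eqref{lemmaQL}, however, you must also have $\mathrm{sign}(\tilde b_j)=\mathrm{sign}(b^*_j)$ for every $j\in\supp$ with $\tilde b_j\neq 0$. Your Step~2 bound only delivers this under a minimum-signal condition on $\min_{j\in\supp}|b^*_j|$, which the proposition does \emph{not} assume (the paper stresses precisely this point in the remark following the statement). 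Without sign consistency, $\tilde{\b}$ need not be a minimizer at all, so strict dual feasibility on $\supp^c$ by itself cannot force $\hat{\b}=\tilde{\b}$, and neither (i) nor (ii) follows.

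The paper avoids this by taking as witness the \emph{actual} restricted minimizer
$\hat{\b}^0:=\argmin_{\b:\,\b_{\supp^c}=0}\tfrac12\b^T\hat{\bOmega}\b-\hat{\a}^T\b+\lambda\|\b\|_1$.
The restricted KKT then reads $\hat{\bOmega}_{\supp,\supp}\hat{\b}^0_{\supp}-\hat{\a}_{\supp}=-\lambda\Z$ for some valid subgradient $\Z$ with $\|\Z\|_\infty\le 1$, in general $\Z\neq\mathrm{sign}(\b^*_{\supp})$. Because $\hat{\b}^0_{\supp}$ is now implicit, the error bound is obtained by writing
$\hat{\b}^0_{\supp}-\b^*_{\supp}=-\bOmega_{\supp,\supp}^{-1}\bigl\{\lambda\Z+[\hat{\bOmega}_{\supp,\supp}-\bOmega_{\supp,\supp}]\hat{\b}^0_{\supp}+\a_{\supp}-\hat{\a}_{\supp}\bigr\}$
and solving the resulting self-referential inequality; this yields exactly the $(1-e_\u)^{-1}$ factor you derived via the Neumann series in Step~1. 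Your Step~3 computation then goes through verbatim with $\Z$ in place of $\mathrm{sign}(\b^*_{\supp})$, since only $\|\Z\|_\infty\le 1$ is ever used there. The fix to your argument is therefore local: replace the explicit formula for the witness by the restricted minimizer and carry the unknown but bounded subgradient $\Z$ through Steps~2 and~3.
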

	Proposition \ref{lem4} provides a general result for the oracle property of an estimator defined by minimizing the estimated quadratic loss \eqref{lemmaQL}. {  We remark that Proposition \ref{lem4} can be applied to many different  statistical problems where quadratic loss taking the form \eqref{lemmaQL} is adopted. 
		In particular, we do not impose any assumption on the signal strength of the nonzero parameters.}
	Conditions in the above proposition rely on the magnitude of  $\|\bOmega_{\supp^c,\supp}(\u)\bOmega_{\supp,\supp}(\u)^{-1}\|_L$, which is related to the well known irrepresentable condition \citep{zhao2006model,zou2006adaptive},  and the uniform estimation error bounds of $\|\hat{\a}(\u)-{\a}(\u)\|_\infty$ and  $\|\hat{\bOmega}(\u)-{\bOmega}(\u)\|_\infty$, which require specific evaluation for different applications.

	\subsubsection{Oracle properties of $\hat{\beta}(u)$}
	Next we establish the oracle properties of $\hat{\beta}(u)$ by using the results obtained in the previous subsections.
	With some abuse of notations, denote the support of $\beta(\u)$ as $\supp_\u=\{i: \beta_{i}(\u)\neq 0 \}$ where $\beta_{i}(\u)$ is the $i$th element of $\beta(\u)$.  Similarly, we use $\hat{\supp}_\u$ to denote the support of the estimator $\hat{\beta}(\u)$ based on \eqref{betaestimate}. From Proposition \ref{lem4} and the uniform bounds established in Section \ref{ConIneq}, we  have:
	\begin{theorem}\label{thmbeta}
		Let Conditions (C1)-(C4) hold. In addition, assume that
		\begin{align*}
			\Big(\sqrt{\frac{\log (p+d)}{n}} +\kappa(h)\Big)\sup_{\u\in B_\v(r)}\|\beta(\u)\|_0\|\Sigma_{\supp,\supp}(\u)^{-1}\|_L  \rightarrow 0,
		\end{align*}
		and there exists a constant $0<\kappa_0<1$ such that $\sup_{\u\in B_\v(r)}\|\Sigma_{\supp^c,\supp}(\u)\Sigma_{\supp,\supp}(\u)^{-1}\|_L <1-\kappa_0$.  By choosing  $\lambda_\beta=C_2 \Big(\sqrt{\frac{\log (p+d)}{n}}+\kappa(h)\Big)\sup_{\u\in B_\v(r)}( |\beta(\u)|_1+1)$  for some large enough constant $C_2$, and denoting $M_{r}=\sup_{\u\in B_\v(r)}\|\Sigma_{\supp,\supp}(\u)^{-1}\|_L$, we have, for any given $r=O(\log_d(p+d))$,
		\begin{itemize}
			\item[(i)] $P\left(\underset{\u\in B_\v(r)}{\bigcap} \{\hat{\supp}_\u={\supp}(\u)\}\right)=1-O((p+d)^{-1})$;
			\item[(ii)]
			$P\left( \sup_{\u\in B_\v(r)}\|\hat{\beta}(\u)-\beta(\u)\|_\infty \leq  2\lambda_\beta M_r)   \right)=1-O((p+d)^{-1}).
			$
		\end{itemize}
	\end{theorem}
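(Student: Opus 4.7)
The plan is to realize $\hat{\beta}(\u)$ from \eqref{betaestimate} as an instance of the generic $\ell_1$-penalized quadratic estimator \eqref{lemmaQL}. After trivial rescaling of the objective, this identifies $\bOmega(\u)=\Sigma(\u)$, $\a(\u)=\mu_1(\u)-\mu_2(\u)$, and hence $\b^*(\u)=\beta(\u)$, with plug-ins $\hat{\bOmega}(\u)=\hat{\Sigma}(\u)$, $\hat{\a}(\u)=\hat{\mu}_1(\u)-\hat{\mu}_2(\u)$, and a tuning parameter proportional to $\lambda_\beta$. Both conclusions (i) and (ii) will then follow directly from Proposition \ref{lem4}, once its two premises are verified to hold uniformly on $\u\in B_\v(r)$, with quantitative inputs supplied by the concentration results of Theorems \ref{thmmu} and \ref{thmSig}.

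Concretely, I would fix $\epsilon_n=C(\sqrt{\log(p+d)/n}+\kappa(h))$ for a sufficiently large constant $C$. Because Theorems \ref{thmmu} and \ref{thmSig} already incorporate an internal supremum over $\u\in B_\v(r)$, a single application of each yields a joint event $\mathcal{E}$ of probability $1-O((p+d)^{-1})$ on which $\sup_\u\|\hat{\mu}_i(\u)-\mu_i(\u)\|_\infty\le\epsilon_n$ for $i=1,2$ and $\sup_\u\|\hat{\Sigma}(\u)-\Sigma(\u)\|_\infty\le\epsilon_n$; the $d^r=(p+d)^{O(1)}$ prefactor in those bounds is absorbed into the exponent by taking $C$ large enough, which is permitted by Condition (C2). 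Then the triangle inequality and the elementary bound $\|Mv\|_\infty\le\|M\|_\infty\|v\|_1$ give, uniformly on $\mathcal{E}$,
\begin{align*}
\epsilon_\u\le 2\epsilon_n+\epsilon_n\|\beta(\u)\|_1,\qquad e_\u\le\|\beta(\u)\|_0\,\|\Sigma_{\supp,\supp}(\u)^{-1}\|_L\,\epsilon_n.
\end{align*}

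With these bounds I would verify the two hypotheses of Proposition \ref{lem4}. The sparsity-scaling assumption of the theorem drives $\sup_\u e_\u\to 0$, so combined with $\sup_\u\|\Sigma_{\supp^c,\supp}\Sigma_{\supp,\supp}^{-1}\|_L\le 1-\kappa_0$ the first hypothesis $\sup_\u(\|\Sigma_{\supp^c,\supp}\Sigma_{\supp,\supp}^{-1}\|_L+e_\u)<1$ holds for $n$ large, and the effective denominator $1-\|\Sigma_{\supp^c,\supp}\Sigma_{\supp,\supp}^{-1}\|_L-2e_\u$ is eventually at least $\kappa_0/2$. The second hypothesis then reduces to $\lambda_\beta\gtrsim\kappa_0^{-1}\epsilon_n\sup_\u(\|\beta(\u)\|_1+1)$, which is precisely met by the prescribed $\lambda_\beta=C_2(\sqrt{\log(p+d)/n}+\kappa(h))\sup_{\u\in B_\v(r)}(\|\beta(\u)\|_1+1)$ upon choosing $C_2$ large relative to $C$ and $\kappa_0$. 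Proposition \ref{lem4}(i) then yields $\hat{\supp}_\u=\supp_\u$ for every $\u\in B_\v(r)$ on $\mathcal{E}$, proving (i); Proposition \ref{lem4}(ii) gives $\|\hat{\beta}(\u)-\beta(\u)\|_\infty\le 2(1-e_\u)^{-1}\|\Sigma_{\supp,\supp}(\u)^{-1}\|_L\lambda_\beta$, and since $(1-e_\u)^{-1}\to 1$ uniformly on $\mathcal{E}$, a slight enlargement of $C_2$ absorbs this factor and delivers the stated bound $2M_r\lambda_\beta$, proving (ii).

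The main technical delicacy is tuning the constants so that the $d^r$-sized union bound baked into Theorems \ref{thmmu}--\ref{thmSig} coexists with the target $1-O((p+d)^{-1})$ probability \emph{and} with the lower bound required of $\lambda_\beta$ by Proposition \ref{lem4}; once $r=O(\log_d(p+d))$ is exploited to write $d^r=(p+d)^{O(1)}$, this reduces to choosing $C$ (hence $C_2$) sufficiently large, and the rest of the argument is deterministic bookkeeping handed off to Proposition \ref{lem4}. All the analytically substantial work -- the high-dimensional concentration for $\hat{\mu}_i$ and $\hat{\Sigma}$ via the integrated small-ball probability machinery and the Bochner-type analogue of Lemma \ref{Kexp} -- has already been carried out upstream.
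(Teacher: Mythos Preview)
Your proposal is correct and follows essentially the same approach as the paper: set $\epsilon_n=C(\sqrt{\log(p+d)/n}+\kappa(h))$, invoke Theorems \ref{thmmu} and \ref{thmSig} to obtain the uniform high-probability bounds on $\hat{\mu}_i$ and $\hat{\Sigma}$, verify the two hypotheses of Proposition \ref{lem4}, and read off (i)--(ii). If anything, you have spelled out more carefully than the paper how the rescaling between \eqref{betaestimate} and \eqref{lemmaQL} works, how $d^r=(p+d)^{O(1)}$ is absorbed into the exponent, and how the constants $C$, $C_2$, and $\kappa_0$ interact.
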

	Note that we do allow the support of $\beta(\u)$ to be different across different $\u\in B_\v(r)$. 
	Part (i) in Theorem \ref{thmbeta}   states that the common support of non-informative features can be consistently identified.  
	Part (ii) of Theorem \ref{thmbeta}  indicates that the estimation error for the nonzero elements is of order $O_p(\lambda_{\beta}M_r)$. This is similar to the error bound obtained in Theorem 2 of \cite{mai2012direct}. However, instead of imposing any assumption for the minimal signal $|\beta(\u)|_{\min}$ as in Condition 2 of \cite{mai2012direct}, our error bound relies on the total signal strength $|\beta(\u)|_1$ through the choice of $\lambda_\beta$. 
	
	\subsection{Consistency of $\hat{\eta}(\u)$}

	Theoretical properties of penalized logistic regression have been well explored in the literature; see for example \cite{meier2008group} and \cite{rocha2009asymptotic}.
	Other than studying the excess risk or the global error of the estimator $\hat{\eta}$ as in \cite{meier2008group} or establishing consistency for the coefficient estimators with requires additional assumptions on the signal strength for the nonzero parameters, here we directly explore the  estimation accuracy of $\hat\eta(\u)=\hat{A}_0+\hat{\A}^\top\u$ in estimating $\eta(\u)={A}_0+{\A}^\top\u$. 
	We first assume the following conditions hold:
	\begin{itemize}
		
		\item[(C5).] Let $\W= \U I\{L=1\}+ \V I\{L=2\}$ where $\U$ and $\V$ are independent binary vectors with density functions $p_1(\cdot)$ and $p_{2}(\cdot)$, respectively. We assume that the covariance matrix $\H={\rm Var}(\W)$ is positive definite with eigenvalues bounded away from zero. 
		
		\item[(C6).] Let  ${\cal M}=\{ j: A_j\neq 0 \}$  and  $M=|\A_1|_0$. We assume that   $1-\max_{e\in {\cal M}^c}|\H_{e,{\cal M}}\H_{{\cal M},{\cal M}}^{-1}|>0$, and 
		$ M^2e^M\sqrt{\frac{\log d}{n}}\rightarrow 0$.
		
	\end{itemize}
	Condition (C6) is  an irrepresentability assumption to guarantee model selection consistency. 
	The following theorem provides the uniform estimation error for $\hat{\eta}(\u)$. 
	
	\begin{theorem}\label{eta_consistency}
		Let Conditions (C5) and (C6) hold. 
		We have 
		\[
		\sup_{\u\in \{0,1\}^d}|\hat{\eta}(\u)-\eta(\u)|=O_p\left( M^2e^M\sqrt{\frac{\log d}{n}} \right). 
		\]
	\end{theorem}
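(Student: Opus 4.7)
The plan is to reduce the uniform error over $\u\in\{0,1\}^d$ to an $\ell_1$ error on the parameter vector, establish model selection consistency via a primal--dual witness argument, and then bound the estimation error on the support by a Taylor analysis of the penalized logistic score.

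\textbf{Reduction.} Since $\hat\eta(\u)-\eta(\u)=(\hat A_0-A_0)+(\hat\A-\A)^T\u$ and each $u_i\in\{0,1\}$, the linearity gives
\[
\sup_{\u\in\{0,1\}^d}|\hat\eta(\u)-\eta(\u)|\le |\hat A_0-A_0|+\|\hat\A-\A\|_1.
\]
Hence it suffices to show, with probability $1-o(1)$, that $\hat\A$ is supported on $\mathcal M$ and $\|\hat\A_{\mathcal M}-\A_{\mathcal M}\|_\infty+|\hat A_0-A_0|=O(Me^M\sqrt{\log d/n})$, so the $\ell_1$ bound picks up an extra factor of $M$.

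\textbf{Support recovery via primal--dual witness.} Let $L_n(A_0,\A)$ denote the empirical logistic loss in \eqref{etaestimation}. First I would solve the oracle (restricted) problem that fixes $\A_{\mathcal M^c}=0$ and obtain $(\tilde A_0,\tilde\A_{\mathcal M})$. Writing the KKT conditions of the full problem at this candidate and using a first order Taylor expansion of $\nabla L_n$ around the truth, the stationarity equation for components $e\in\mathcal M^c$ reads
\[
\nabla_e L_n(A_0,\A)+\H_{e,\mathcal M}\H_{\mathcal M,\mathcal M}^{-1}\bigl[\lambda_\eta\,\mathrm{sign}(\tilde\A_{\mathcal M})-\nabla_{\mathcal M}L_n(A_0,\A)\bigr]+\text{remainder}=\lambda_\eta s_e,
\]
up to replacing $\H$ with a reweighted sample Hessian. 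The strict irrepresentability assumption in (C6), combined with $\|\nabla L_n(A_0,\A)\|_\infty=O_p(\sqrt{\log d/n})$ (componentwise Hoeffding, since the score entries are bounded averages of $|U_i-\sigma(\eta(\U_i))|\le 1$), forces $|s_e|<1$ strictly; uniqueness of the LASSO-type solution then gives $\hat\A_{\mathcal M^c}=0$ with probability $1-O((p+d)^{-c})$. Choosing $\lambda_\eta$ of order $\sqrt{\log d/n}$ up to constants is what makes the dominant term in this inequality tractable.

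\textbf{Error bound on the active set.} On the high probability event above, the restricted KKT system reads
\[
\nabla_{\mathcal M}L_n(\hat A_0,\hat\A_{\mathcal M})=-\lambda_\eta\,\mathrm{sign}(\hat\A_{\mathcal M}),\qquad \nabla_0 L_n(\hat A_0,\hat\A_{\mathcal M})=0,
\]
and a second-order Taylor expansion gives
\[
\nabla^2_{(\mathcal M\cup\{0\})}L_n(\bar A_0,\bar\A)\begin{pmatrix}\hat A_0-A_0\\ \hat\A_{\mathcal M}-\A_{\mathcal M}\end{pmatrix}=-\nabla L_n(A_0,\A)-\lambda_\eta\begin{pmatrix}0\\ \mathrm{sign}(\hat\A_{\mathcal M})\end{pmatrix}.
\]
The sample Hessian is $n^{-1}\sum_i \sigma_i(1-\sigma_i)(1,\U_i)^{\otimes 2}$, whose eigenvalues on the active block are lower bounded by $c\,e^{-M}\lambda_{\min}(\H_{\mathcal M,\mathcal M})$, because the sigmoid weights satisfy $\sigma(1-\sigma)\ge c\,e^{-|\eta|}$ and at an intermediate point $|\eta|\le|A_0|+\|\A_{\mathcal M}\|_1=O(M)$. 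Condition (C5) supplies $\lambda_{\min}(\H_{\mathcal M,\mathcal M})\ge c>0$ and a standard empirical process / concentration step (using $\log d/n\to 0$) transfers this lower bound from $\H$ to its sample analogue. Inverting the Hessian and using $\|H^{-1}\|_{\infty\to\infty}\le\sqrt{M+1}\,\|H^{-1}\|_2\le C\sqrt{M}\,e^M$, we obtain
\[
\|\hat\A_{\mathcal M}-\A_{\mathcal M}\|_\infty+|\hat A_0-A_0|\le C\sqrt{M}\,e^M\bigl(\|\nabla L_n(A_0,\A)\|_\infty+\lambda_\eta\bigr)=O_p\bigl(Me^M\sqrt{\log d/n}\bigr).
\]
Multiplying by $M$ when passing to the $\ell_1$ norm yields the claimed rate $M^2 e^M\sqrt{\log d/n}$.

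\textbf{Main obstacle.} The delicate point is the lower bound on the Hessian. Because the logistic weights $\sigma(1-\sigma)$ degrade exponentially when $|\eta|$ is large, and $|\eta(\u)|$ can be as large as $\|\A\|_1=O(M)$, one cannot avoid a factor of $e^{-M}$ in $\lambda_{\min}$; controlling the intermediate point in the Taylor expansion without inflating this exponent further (which would degrade the rate beyond $M^2e^M$) requires care, and so does making the empirical Hessian concentrate uniformly around $\H$ with error small enough that the reweighted irrepresentability used in the PDW step continues to hold, which is exactly what the assumption $M^2e^M\sqrt{\log d/n}\to 0$ in (C6) buys us.
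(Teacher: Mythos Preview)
Your proposal is correct and runs on the same rails as the paper's proof: both first invoke the irrepresentability condition (C6) to obtain $\hat{\mathcal M}=\mathcal M$ with high probability (the paper defers this to ``arguments as in the proof of Proposition~\ref{lem4}'', which is exactly your PDW construction), and both then reduce $\sup_\u|\hat\eta(\u)-\eta(\u)|$ to the $\ell_1$ error of the coefficient vector. The key analytic ingredients are also identical: a Hoeffding/Bernstein bound $\|\nabla L_n(\eta)\|_\infty=O_p(\sqrt{\log d/n})$ and a curvature lower bound on the restricted Hessian that degrades by a factor $e^{-M}$ because the logistic weights $\sigma(1-\sigma)$ can be as small as $ce^{-|\eta|}$ with $|\eta(\w)|\le \|\A\|_1=O(M)$.

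The one genuine methodological difference is in how the error on the active set is extracted. The paper uses a \emph{localization} argument: it Taylor-expands the loss $L_n$ itself (not its gradient), shows that on the $\ell_1$-sphere of radius $\delta=C M^2e^M\sqrt{\log d/n}$ the quadratic term $\ge \delta^2/(Me^M)$ dominates the linear term, and concludes by convexity that the minimizer lies inside the ball. You instead invert the restricted KKT system via a Taylor expansion of the score. The paper's route has the advantage that the intermediate point $\eta^*$ in its Taylor expansion automatically lies on the segment between $\eta$ and a point on a \emph{fixed} sphere, so $|\eta^*(\w)|\le |\eta(\w)|+\delta$ is controlled a priori; your route needs the self-bounding step you flag as the ``main obstacle'' (controlling the Hessian at an intermediate point between $\eta$ and $\hat\eta$ before you know $\hat\eta$ is close). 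Both are standard and lead to the same rate, but the localization argument is slightly cleaner here precisely because it sidesteps that chicken-and-egg issue.
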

	From Theorem \ref{eta_consistency}, the uniform error bound  is reduced to  $O_p\left( \sqrt{\frac{\log d}{n}} \right)$ when the number of nonzero parameters $M$ is bounded.
	
	\subsection{Misclassification rate}
	Given $\U = \u$, denote $D(\Z; \u)  = \beta(\u)^\top\Big[\Z-\frac{\mu_1(\u)+\mu_2(\u)}{2}\Big]+\eta(\u)$,  {and correspondingly denote $\hat{D}(\Z; \u) = \hat{\beta}(\u)^\top\left[\Z-\frac{\hat{\mu}_1(\u)+\hat{\mu}_2(\u)}{2}\right]+\hat{\eta}(\u)$.} 
	The optimal Bayes' risk is given as 
	\[
	R_{\u} = \pi_1R_{\u}(2|1) + \pi_2R_{\u}(1|2),
	\]
	where $R_{\u}(2|1)=P( {D}(\Z; \u) \le 0| \Z\in N(\mu_1(\u),\Sigma(\u))$ and $R_{\u}(1|2)=P( {D}(\Z; \u) >0| \Z\in N(\mu_2(\u),\Sigma(\u))$. Correspondingly, the misclassification rate of our proposed method is 
	\begin{align*}
		\hat{R}_{\u} = \pi_1\hat{R}_{\u}(2|1) + \pi_2\hat{R}_{\u}(1|2),
	\end{align*}
	where $\hat{R}_{\u}(2|1)=P(\hat{D}(\Z; \u) \le 0| \Z\in N(\mu_1(\u),\Sigma(\u))$ and $\hat{R}_{\u}(1|2)=P(\hat{D}(\Z, \u) >0| \Z\in N(\mu_2(\u),\Sigma(\u))$. 
	Note that when $\Z\in N(\mu_2(\u),\Sigma(\u))$, we have, 
	\[
	D(\Z;\u)\sim N\left(  {\beta(\u)^\top[\mu_2(\u)-\mu_1(\u)]}/{2}+\eta(\u),  {\beta(\u)^\top[\mu_1(\u)-\mu_2(\u)]} \right).
	\] 
	Denote $\Delta(\u):=\beta(\u)^\top[\mu_1(\u)-\mu_2(\u)]=[\mu_1(\u)-\mu_2(\u)]^\top\Sigma^{-1}(\u)[\mu_1(\u)-\mu_2(\u)]$. We assume: 
	\begin{itemize}
		\item[(C7).] $\sup_{{\bf u}\in \{0,1\}^d }\Delta(\u)\geq \delta$ for some constant $\delta>0$.
	\end{itemize}
	We remark that $\Delta(\u)$ captures how far are the (normalized) centers of the two classes away from each other.  Condition (C7) ensures that the center of the two class are separable. 
	The following theorem indicates that the estimated semiparametric classification rule $I\{\hat{D}(\Z; \u)>0\}$ is asymptotically optimal. 
	\begin{theorem}\label{MCR}
		Let Conditions (C1)-(C7) hold. For a given $\u\in \{0,1\}^d$ of interest, let $s=\underset{ \u\in B_\v(r)}{\sup} \|\beta(\u)\|_0$. We have:
		\begin{eqnarray}\label{R}
			&&\underset{ \u\in B_\v(r)}{\sup} |\hat{R}_{\u}-{R}_{\u}|\\
			&=&O_p\left(\Big(s+\sup_{\u\in B_\v(r)} |\beta(\u)|_1\Big)M_r\left(\sqrt{\frac{\log(p+d)}{n}}+\kappa(h)\right)+M^2e^M\sqrt{\frac{\log d}{n}}\right). \nonumber
		\end{eqnarray}
	\end{theorem}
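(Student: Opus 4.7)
\textbf{Proof proposal for Theorem \ref{MCR}.}
The plan is to write both $R_{\u}$ and $\hat{R}_{\u}$ as linear combinations of $\Phi(\cdot)$ evaluated at explicit signal-to-noise ratios, and then exploit the fact that $\Phi$ is $(2\pi)^{-1/2}$-Lipschitz to reduce the problem to controlling those ratios. Specifically, conditional on the training sample, if $\Z\sim N(\mu_1(\u),\Sigma(\u))$ then
\begin{align*}
\hat{D}(\Z;\u)\sim N\!\left(\hat\beta(\u)^{T}[\mu_1(\u)-\tfrac{\hat\mu_1(\u)+\hat\mu_2(\u)}{2}]+\hat\eta(\u),\ \hat\beta(\u)^{T}\Sigma(\u)\hat\beta(\u)\right),
\end{align*}
so that $\hat R_{\u}(2|1)$ equals $\Phi(-\hat N_1(\u)/\hat\sigma(\u))$ for the obvious numerator $\hat N_1(\u)$ and denominator $\hat\sigma(\u)$; an analogous identity holds for $\hat R_{\u}(1|2)$ and for the Bayes counterparts $N_1(\u),\sigma(\u)$. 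Applying Lipschitz continuity of $\Phi$ therefore bounds $\sup_{\u\in B_{\v}(r)}|\hat R_{\u}-R_{\u}|$ by the sum of the corresponding arguments' differences.

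Next I would expand the numerator difference as
\begin{align*}
\hat N_1(\u)-N_1(\u)=\tfrac{1}{2}(\hat\beta-\beta)^{T}(\mu_1-\mu_2)+\hat\beta^{T}(m-\hat m)+(\hat\eta-\eta),
\end{align*}
with $m=(\mu_1+\mu_2)/2$, and split $\hat\beta^{T}(m-\hat m)=\beta^{T}(m-\hat m)+(\hat\beta-\beta)^{T}(m-\hat m)$. Each summand is then bounded by a $\|\cdot\|_{1}\times\|\cdot\|_{\infty}$ Hölder inequality. Crucially, Theorem \ref{thmbeta}(i) gives $\hat{\supp}_{\u}=\supp_{\u}$ uniformly over $B_{\v}(r)$ with high probability, so $\|\hat\beta-\beta\|_{1}\le s\|\hat\beta-\beta\|_{\infty}\le 2s\lambda_{\beta}M_{r}$ by Theorem \ref{thmbeta}(ii). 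Combining this with the uniform $\ell_{\infty}$ rate for $\hat\mu_1,\hat\mu_2$ from Theorem \ref{thmmu}, the uniform rate for $\hat\eta$ from Theorem \ref{eta_consistency}, and the boundedness of $\|\mu_i\|_{\infty}$ implied by Condition (C4) yields
\begin{align*}
\sup_{\u\in B_{\v}(r)}|\hat N_1(\u)-N_1(\u)|=O_p\!\left((s+\sup_{\u}|\beta(\u)|_{1})M_{r}\bigl(\sqrt{\tfrac{\log(p+d)}{n}}+\kappa(h)\bigr)+M^{2}e^{M}\sqrt{\tfrac{\log d}{n}}\right).
\end{align*}

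For the denominator I would expand $\hat\beta^{T}\Sigma\hat\beta-\beta^{T}\Sigma\beta=(\hat\beta-\beta)^{T}\Sigma(\hat\beta+\beta)$ and bound it with the same support-restricted tools together with Theorem \ref{thmSig} and the eigenvalue bound in Condition (C4), producing a rate of the same form but of smaller order. Condition (C7) (read as $\inf_{\u\in B_{\v}(r)}\Delta(\u)\ge\delta$) then guarantees that both $\sigma(\u)$ and $\hat\sigma(\u)$ are bounded away from zero with high probability, so that the ratio $\hat N_1/\hat\sigma$ is well-controlled and the triangle inequality $|\hat N_1/\hat\sigma-N_1/\sigma|\le|\hat N_1-N_1|/\hat\sigma+|N_1||\hat\sigma-\sigma|/(\sigma\hat\sigma)$ delivers the desired uniform bound on the arguments of $\Phi$, and hence on $|\hat R_{\u}-R_{\u}|$, after combining with the analogous bound for the $1|2$ term.

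The main obstacle will be keeping the error bookkeeping tight, especially ensuring that cross-terms involving $\hat\beta-\beta$ and $\hat m-m$ are absorbed into the leading-order rate rather than producing spurious factors of $s\cdot|\beta|_{1}$; the key device is the support-recovery conclusion of Theorem \ref{thmbeta}(i), which converts all $\ell_{1}$ norms on $\hat\beta-\beta$ into $s$ times $\ell_{\infty}$ bounds and thereby keeps the $s$ and $|\beta|_{1}$ contributions additive rather than multiplicative. A secondary technical point is verifying that the conditional-on-sample Gaussian representation of $\hat D(\Z;\u)$ remains valid uniformly over the high-probability event on which all the consistency results of the previous subsections hold simultaneously, which follows by a union bound over the $O(d^{r})$ cells in $B_{\v}(r)$ together with the exponential tails in Theorems \ref{thmmu} and \ref{thmSig}.
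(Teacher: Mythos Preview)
Your proposal is correct but follows a different route from the paper. The paper does not compute $\hat R_{\u}(2|1)$ and $\hat R_{\u}(1|2)$ as $\Phi$ of explicit signal-to-noise ratios; instead it bounds the random variable $\hat D(\Z;\u)-D(\Z;\u)$ directly (splitting it into $\Z^{T}(\hat\beta-\beta)$, the $\hat\beta^{T}\hat m-\beta^{T}m$ term, and $\hat\eta-\eta$), obtains the rate in \eqref{R} for $\sup_{\u}|\hat D-D|$, and then writes $\hat R_{\u}(1|2)=P\big(D(\Z;\u)>D(\Z;\u)-\hat D(\Z;\u)\big)$ and integrates the bounded density of $D(\Z;\u)$ over an interval of length $|\hat D-D|$. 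Your approach instead freezes the training sample, uses the exact Gaussianity of $\hat D(\Z;\u)$ to write $\hat R_{\u}(\cdot|\cdot)$ as $\Phi$ of a ratio, and applies the Lipschitz property of $\Phi$.

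What each buys: the paper's argument is shorter and avoids having to lower-bound $\hat\sigma(\u)$ or to control $|N_1(\u)|$ in the ratio perturbation, but it is slightly informal in that the ``interval'' $[0,|\hat D-D|]$ depends on the test point $\Z$ through $\Z^{T}(\hat\beta-\beta)$, a point the paper does not dwell on. Your approach is cleaner probabilistically (all quantities entering $\Phi$ are measurable with respect to the training sample), but it requires the extra step of invoking Condition (C7) to keep $\hat\sigma(\u)$ bounded away from zero and some care with the $|N_1||\hat\sigma-\sigma|/(\sigma\hat\sigma)$ cross-term. Your remark about keeping the $s$ and $|\beta|_{1}$ contributions additive is well taken; the paper's own bookkeeping on this point is not more detailed than yours, so either argument delivers the stated rate at the same level of rigor.
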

	There are two terms on the right hand side of \eqref{R}. The first term  is similar to those for other high dimensional LDA classifiers with  continuous data only \citep{cai2011direct, jiang2017dynamic}, and is mainly introduced by the estimation of  $\beta(\u)$. The second term   is introduced by the estimation of $\eta(\u)$.
	Although both $p$ and $d$ are allowed to grow exponentially in $n$, the sparsity requirement for the discrete variables seems to be more restricted as we require $M^2e^M$ to be $o\big(\sqrt{\frac{n}{\log d} }\big)$. 

	\section{Numerical Studies}
	
	\subsection{Selection of tuning parameters }
	In what follows we will introduce the use of cross validation for determining the bandwidths and the tuning parameters $\lambda_\beta$ and $\lambda_\eta$. 

	\subsubsection{Selection of  the bandwidth and $\lambda_\beta$ for the estimation of $\beta(\u)$}
	One approach to choose the bandwidths $h_x, h_y, h_{xx}$ and $h_{yy}$ is to use leave-one-out cross validation as is usually done in kernel smoothing. This bandwidth selection approach does not recognize the constraint in our theory that the bandwidths should be large enough (see Condition C2) to guarantee an analogue of Bochner's Lemma  (i.e., Lemma \ref{Kexp}) to hold. As an alternative, we  propose to select the bandwidths and the tuning parameter $\lambda_\beta$ together by minimizing classification error.

	For simplicity, we shall use a common bandwidth parameter $h$ for all the bandwidths $h_x, h_y, h_{xx}$ and $h_{yy}$.   
	{
		Suppose we reformulate the weights by 
		introducing $\theta =\frac{\exp\{-(dh )^{-1}\}}{1+\exp\{-(dh )^{-1}\}} $. That is, by writing 	
		$
		\exp\{- {(dh )^{-1}}{t }\}= \left( \frac{\theta}{1 - \theta }\right)^{t }
		$, 
		we have 
		\begin{eqnarray*}\label{muhat2}
			\hat{\mu}_{1} (\u) = \sum_{j=1}^{n_1} \frac{ (\frac{\theta }{1-\theta })^{|\U_j-\u|_1} \X_{j}}{ \sum_{j = 1}^{n_1}   (\frac{\theta }{1-\theta })^{|\U_j-\u|_1} }, \qquad  
			\hat{\mu}_{2} (\u) = \sum_{j=1}^{n_2} \frac{ (\frac{\theta }{1-\theta })^{|\V_j-\u|_1} \Y_{j}}{ \sum_{j = 1}^{n_2}   (\frac{\theta }{1-\theta })^{|\V_j-\u|_1} }.
		\end{eqnarray*}
		Clearly $\theta \in  [0,0.5]$.
		When $\theta \rightarrow \frac{1}{2}$, $\hat\mu_{1i}(\u)$ and $\hat\mu_{2i}(\u)$ reduce to the means of all the samples, and when   $\theta  \rightarrow 0$, $\hat\mu_{1i}(\u)$ and $\hat\mu_{2i}(\u)$ reduce to the sample means in the cells only. Coincidentally,  under this   formulation, we found that subject to a normalizing term $(1-\theta_x)^d$, the denominator is  the same as the  smoothing  estimator for the distribution of a high dimensional and binary random vector in \cite{aitchison1976multivariate} and \cite{grund1993performance}. } We shall be adopting this new formulation for tuning selection, as the parameter $\theta$ is now bounded, which is practically more convenient for tuning.

	Note that  Proposition \ref{p1} implies that  the estimation  of $\beta(\u)$
	can be independently conducted by minimizes the expected misclassification rate over the class of zero-intercept  classifiers. More specifically,   given $(\theta,\lambda_\beta)$, let $\hat{\beta}_{-i}(\U_{i})$ and $\hat{\beta}_{-i}(\V_{i})$ be the estimators obtained using \eqref{betaestimate} by leaving $(\X_i, \U_i)$ and  $(\Y_i, \V_i)$ out, respectively.  We choose   $(\theta, \lambda_\beta)$ such that the following misclassification rate is minimized:
	\begin{eqnarray*}
		R_0(\lambda_\beta) &=& \sum_{i=1}^{n_1} I \left\{ \hat{\beta}_{-i}(\U_{i})^\top \left( \X_i -  \frac{\hat{\mu}_{1,-i}(\U_i) + \hat{\mu}_2(\U_i)}{2}  \right) \leq 0\right\}\\
		&&+ \sum_{j=1}^{n_2} I \left\{ \hat{\beta}_{-j}(\V_{j})^\top \left( \Y_j -  \frac{\hat{\mu}_1(\V_j) + \hat{\mu}_{2,-j}(\V_j)}{2}  \right) \geq 0 \right\}.
	\end{eqnarray*}

	\subsubsection{Selection of $\lambda_\eta$ for the estimation of $\eta(\u)$}
	Given the chosen $(\theta, \lambda_\beta)$, we denote  
	\begin{align*}
		\zeta_i:= \hat{\beta}_{-i}(\U_{i})^\top \left( \X_i -  \frac{\hat{\mu}_{1,-i}(\U_i) + \hat{\mu}_2(\U_i)}{2}  \right),\quad \zeta_{n_1+j}=\hat{\beta}_{-j}(\V_{j})^\top \left( \Y_j -  \frac{\hat{\mu}_1(\V_j) + \hat{\mu}_{2,-j}(\V_j)}{2}  \right)
	\end{align*}
	for $ i=1,\ldots, n_1$ and $ j=1,\ldots, n_2$. Note that these values have been computed when determining $\lambda_\beta$ and hence it requires no extra computation burden. Let $(\hat{A}_{0,-i}, \hat{\A}_{-i})$ and $(\hat{A}_{0,-(n_1+j)}, \hat{\A}_{-(n_1+j)})$ be the estimator based on \eqref{etaestimation} by leaving $\U_i$ and $\V_j$ out, respectively. We then choose $\lambda_\eta$ by minimizing the following misclassification rate:
	\begin{eqnarray*}
		R(\lambda_\eta)  &=& \sum_{i=1}^{n_1} I \left\{ \zeta_i+ \hat{A}_{0,-i}+ \U_i^\top\hat{\A}_{-i}\leq 0\right\}+ \sum_{j=1}^{n_2} I \left\{ \zeta_{n_1+j}+\hat{A}_{0,-(n_1+j)}+ \V_j^\top\hat{\A}_{-(n_1+j)} \geq 0 \right\}.
	\end{eqnarray*}
	
	\subsection{Alternative classifiers}
	For any $\u_1,\u_2\in \{0,1\}^d$, we use  $d(\u_1,\u_2)$ to denote the distance between $\u_1,\u_2$, where  $d(\cdot,\cdot): \{0,1\}^d\times \{0,1\}^d \mapsto {\mathbb R}$ is a distance metric. While the normalized Hamming distance: $d(\u_1,\u_2)=\frac{|\u_1-\u_2|_1}{ {d}}$ has been adopted (c.f. Section \ref{sec:betau}) in our estimation of the semiparametric location model, 
	we can also assume that there exists an embedding  of the categorical variables in ${\mathbb R}$, i.e, $\mathcal{E}(\cdot): \{0,1\}^d \mapsto {\mathbb R}$, such that $d(\u_1,\u_2)=| \mathcal{E}(\u_1)-\mathcal{E}(\u_2) |$.  In particular, in this numerical study, we will consider the following two cases:
	\begin{itemize}
		\item Linear Embedding:   $\mathcal{E}(\u)=\eta(\u)=A_0+\A^\top\u$, where $\eta(\u)$ is optimal linear classifier defined as in \eqref{binary3}. As a consequence,  we have $d(\u_1,\u_2)=|\A^\top(\u_1-\u_2) |$. We shall denote the semiparametric location model with such a linear embedding on the categorical variables   as ${\rm SLM_{LE}}$.
		\item  Deep Embedding:  Embedding of categorical variables via deep neural networks has been well studied in the literature of Natural Language Processing, and one of the most popular methods is the neural network-based model ``Word2Vec" \cite{mikolov2013efficient}. While ``Word2Vec" was developed for word vectors, \cite{wen2016cat2vec} proposed a new model ``Cat2Vec" which extended the deep embedding framework of ``Word2Vec" to deal with general categorical variables. We shall adopt the ``Cat2Vec" approach and maps the high dimensional categorical variables onto the real space ${\mathbb R}$, and we denote the semiparametric location model with such a deep embedding on the categorical variables as ${\rm SLM_{DE}}$.
	\end{itemize}
	

	\subsection{Simulation}

	Let $\u=(u_1,\ldots, u_d)^\top$ be  a generic location in $\{0,1\}^d$. Given a location $\u$, observations from class 1 are generated from
	$ N(\mu_1(\u), \Sigma(\u))$, and those   from class 2 are generated from $ N(\mu_2(\u),\Sigma(\u))$.
	In this simulation,  the mean functions $\mu_1(\u) = (\mu_{11}(\u),\cdots,\mu_{1p}(\u))^\top$ and $\mu_2(\u) = (\mu_{21}(\u),\cdots,\mu_{2p}(\u))^\top$ are set as $\mu_1(\u) = \frac{\Sigma(\u) \beta(\u)}{2}= -  \mu_{2}(\u) $. In the following we consider different settings for $\Sigma(\u)$ and $\beta(\u)$:  

	\noindent
	{\bf Model 1.}    We set the $(i,j)$th element of  $\Sigma(\u)$ as $\sigma_{ij}(\u) = \bar{u}^{|i-j|},i ,j \in {1,\cdots,p}$, where $\bar{u}=d^{-1}\sum_{k=1}^{d}u_{k}$,  and let $\beta_1(\u) =  \beta_2(\u) =   5\big( \frac{\sum_{k=1}^ {d}u_{k}}{\sqrt{d}} - \frac{\sqrt{d}}{2}\big), \beta_3(\u)= \cdots =\beta_p(\u) = 0$.
	
	\noindent
	{\bf Model 2.}  
	We set the $(i,j)$th element of  $\Sigma(\u)$ as $\sigma_{ij}(\u) = \bar{u}^{\frac{|i-j|}{2}}  ,i ,j \in {1,\cdots,p}$, where $\bar{u}=d^{-1}\sum_{k=1}^{d}u_{k}$, and let $\beta_1(\u) =   \beta_2(\u) = 5\big( \frac{\sum_{k=1}^ {d}u_{k}}{\sqrt{d}} - \frac{\sqrt{d}}{2}\big)$, and  $\beta_3(\u)= \cdots =\beta_p(\u) = 0$.  
	
	\noindent
	{\bf Model 3.} 
	We set the $(i,j)$th element of  $\Sigma(\u)$ as $\sigma_{ij}(\u) = [3\bar{u}(1-\bar{u})]^{|i-j|},i ,j \in {1,\cdots,p}$, where $\bar{u}=d^{-1}\sum_{k=1}^{d}u_{k}$, and let $\beta_1(\u) =  \beta_2(\u)=\beta_3(\u) =   5\big( \frac{\sum_{k=1}^ {d}u_{k}}{\sqrt{d}} - \frac{\sqrt{d}}{2}\big), \beta_4(\u)=\cdots= \beta_p(\u) = 0$.

	\noindent
	{\bf Model 4.} 
	We set the $(i,j)$th element of  $\Sigma(\u)$ as $\sigma_{ij}(\u) = \frac{\bar{u}^{|i-j|}}{\exp\{\bar{u}{|i-j|}\}} ,i ,j \in {1,\cdots,p}$, where $\bar{u}=d^{-1}\sum_{k=1}^{d}u_{k}$, and let $\beta_1(\u)  =\cdots = \beta_{5}(\u) =   {\rm sign}{\Big( \frac{\sum_{k=1}^ {d}u_{k}}{\sqrt{d}} - \frac{\sqrt{d}}{2}\Big)}\frac{1}{2}\exp\Big\{| \frac{2\sum_{k=1}^ {d}u_{k}}{\sqrt{d}} -\sqrt{d}| \Big\}$, and  $\beta_{6}(\u)= \cdots =\beta_p(\u) = 0$.

	The locations $\V_i=(V_{i1},\ldots, V_{id})^\top$  in class 2 are randomly generated as $P(V_{ij} = 0) = 0.5, i= 1,2, \cdots, n_2, j=1,2, \cdots,d$. For class 1, we generate  $\U_i=(U_{i1},\ldots, U_{id})^\top$  as $P(U_{ij} = 1) = 0.5+\xi_j$ for $ i = 1,2, \cdots, n_1, j=1, 2, 3, 4, 5$.  We simply set 
	$\xi_1 =\cdots=\xi_5= 0.25$ and  $\xi_6 =\cdots=\xi_d= 0$ for Model 1, and set $\xi_1 =\cdots=\xi_5= 0.3$, $\xi_6 =\cdots=\xi_d= 0$ for Models 2-4. 
	
	We use SLM to denote our proposed semiparametric location model.  For comparison, we also consider the following classifiers:
	\begin{itemize}
		\item PLG: $l_1$ Penalized Logistic Regression \citep{meier2008group}.
		\item$\rm  PLG_{inter}$: $l_1$ Penalized Logistic Regression with the interaction variables $\{ X_{i}U_{j}: 1\leq i\leq p,1\leq j\leq d\}$.
		\item RF: Random Forest  \citep{breiman2001random}.
		\item DSDA: Direct Sparse Discriminant Analysis in \cite{mai2012direct}. 
	\end{itemize}
	We first fix  the sample size to be $n_1=n_2=200$, and compare these 4 methods under the following dimensions: $(d,p)=(10,20), (20,50)$ and $(50,100)$. The misclassification rates of these methods on 200 testing samples are computed over 100 replications.  The means and standard deviations of these misclassification rates   are reported in Table 1, from which we observe that our proposed SLM classifier outperforms other classifiers. The performance of DSDA is slightly better than that of Penalized Logistic Regression. Random Forest, is comparable to other methods when $d$ and $p$ are small, but the misclassification rates become larger than those of the other methods in cases where $d$ and $p$ are large.  

		\begin{table}[h]
			\centering
			\caption{The mean and sd of the misclassification rates of SLM, $\rm SLM_{LE}$, $\rm SLM_{DE}$, PLG, $\rm PLG_{inter}$, RF and DSDA over 100 replications under Models 1-4, with $n_1=n_2=200$.}
			\scalebox{0.6}{
				\begin{tabular}{|l|c|c|c|c|c|c|c|c|c|c|}
					\hline
					& \multicolumn{5}{c|}{Model 1} & \multicolumn{5}{c|}{Model 2} \\ 
					\hline
					(d, p) & (10, 20) & (20,10) & (50,10) & (20, 50) & (50, 100) & (10, 20) & (20,10) & (50,10) & (20, 50) & (50, 100) \\
					\hline
					${\rm Bayes~Risk}$ & 0.158 & 0.176 & 0.192 & 0.177 & 0.191 & 0.147 & 0.166 & 0.18  & 0.172 & 0.189 \\
					$R_{\rm SLM}$ & 0.208(0.031) & 0.232(0.032) & 0.260(0.034) & 0.231(0.031) & 0.266(0.035) & 0.161(0.028) & 0.229(0.035) & 0.253(0.034) & 0.192(0.034) & 0.214(0.030) \\
					$R_{\rm SLM_{LE}}$ & 0.214(0.030) & 0.262(0.022) & 0.288(0.037) & 0.259(0.022) & 0.257(0.029) & 0.221(0.031) & 0.216(0.015) &   0.286(0.036) & 0.267(0.036) & 0.308(0.042) \\
					$R_{\rm SLM_{DE}}$ & 0.332(0.032) &  0.381(0.023)&0.472(0.043)& 0.386(0.027) &0.475(0.043)	& 0.330(00027) &0.375(0.026)&0.450(0.045)& 0.378(0.029) &0.461(0.044)  \\
					$R_{\rm  PLG_{inter}}$ & 0.285(0.030) & 0.364(0.031) & 0.420(0.023) & 0.364(0.030) & 0.20(0.030) & 0.286(0.030) & 0.363(0.027) & 0.417(0.024) & 0.367(0.026) & 0.424(0.029) \\
					$R_{\rm PLG}$ & 0.216(0.035) & 0.269(0.041) & 0.299(0.041) & 0.272(0.037) & 0.315(0.041) & 0.172(0.031) & 0.261(0.037) & 0.297(0.040) & 0.213(0.033) & 0.248(0.040) \\
					$R_{\rm RF}$ & 0.233(0.029) & 0.243(0.035) & 0.288(0.028) & 0.294(0.032) & 0.338(0.037) & 0.189(0.028) & 0.242(0.029) & 0.279(0.036) & 0.245(0.035) & 0.283(0.033) \\
					$R_{\rm DSDA}$ & 0.213(0.032) & 0.251(0.028) & 0.282(0.028) & 0.263(0.030) & 0.295(0.036) & 0.166(0.031) & 0.251(0.033) & 0.275(0.033) & 0.204(0.029) & 0.235(0.034) \\
					\hline
					& \multicolumn{5}{c|}{Model 3} & \multicolumn{5}{c|}{Model 4} \\ 
					\hline
					(d, p) & (10, 20) & (20,10) & (50,10) & (20, 50) & (50, 100) & (10, 20) & (20,10) & (50,10) & (20, 50) & (50, 100) \\
					\hline
					${\rm Bayes~Risk}$ & 0.111 & 0.117 & 0.126 & 0.119 & 0.133 & 0.145 & 0.164 & 0.152 & 0.173 & 0.166 \\
					$R_{\rm SLM}$ & 0.123(0.023) & 0.230(0.035) & 0.242(0.033) & 0.168(0.031) & 0.226(0.036) & 0.176(0.029) & 0.233(0.035) & 0.278(0.036) & 0.193(0.029) & 0.235(0.032) \\
					$R_{\rm SLM_{LE}}$ & 0.215(0.031) & 0.232(0.021) & 0.286(0.033)& 0.278(0.036) & 0.329(0.043) & 0.244(0.030) & 0.263(0.021) &   0.297(0.036)& 0.289(0.035) & 0.331(0.040) \\
					$R_{\rm SLM_{DE}}$ & 0.439(0.067) &0.406(0.027)&  0.499(0.024) & 0.491(0.035) &0.499(0.027)& 0.462(0.046) &0.501(0.026)&0.414(0.024)& 0.491(0.038) &0.495(0.024)\\
					$R_{\rm  PLG_{inter}}$ & 0.299(0.030) & 0.371(0.032) & 0.429(0.258) & 0.374(0.027) & 0.433(0.027) & 0.337(0.029) & 0.373(0.030) & 0.444(0.252) & 0.375(0.026) & 0.446(0.025) \\
					$R_{\rm PLG}$ & 0.174(0.031) & 0.259(0.034) & 0.310(0.042) & 0.223(0.034) & 0.252(0.040) & 0.205(0.039) & 0.269(0.040) & 0.310(0.040) & 0.231(0.037) & 0.264(0.043) \\
					$R_{\rm RF}$ & 0.163(0.024) & 0.246(0.034) & 0.265(0.034) & 0.237(0.032) & 0.286(0.034) & 0.203(0.030) & 0.237(0.034) & 0.274(0.032) & 0.232(0.031) & 0.289(0.031) \\
					$R_{\rm DSDA}$ & 0.164(0.028) & 0.248(0.031) & 0.286(0.035) & 0.210(0.028) & 0.241(0.032) & 0.195(0.029) & 0.260(0.031) & 0.282(0.035) & 0.220(0.032) & 0.247(0.033) \\
					\hline
				\end{tabular}%
			}
			\label{sim1}
		\end{table}%

		In the second simulation, we fix the dimensions $(d,p)$, and let the sample size $n=n_1+n_2$ increase from 200 to 500 (with $n_1=n_2$). The regret, which is defined as the misclassification rate minus the Bayes risk, was computed for each $n$. Figure 1 shows that as $n$ increases, the regret becomes smaller for all the four methods, while the curves for our semiparametric location model generally produce small regret values among all methods  as $n$ increases.

		\begin{figure}[ht]
			\centering
			\includegraphics[width=1\textwidth]{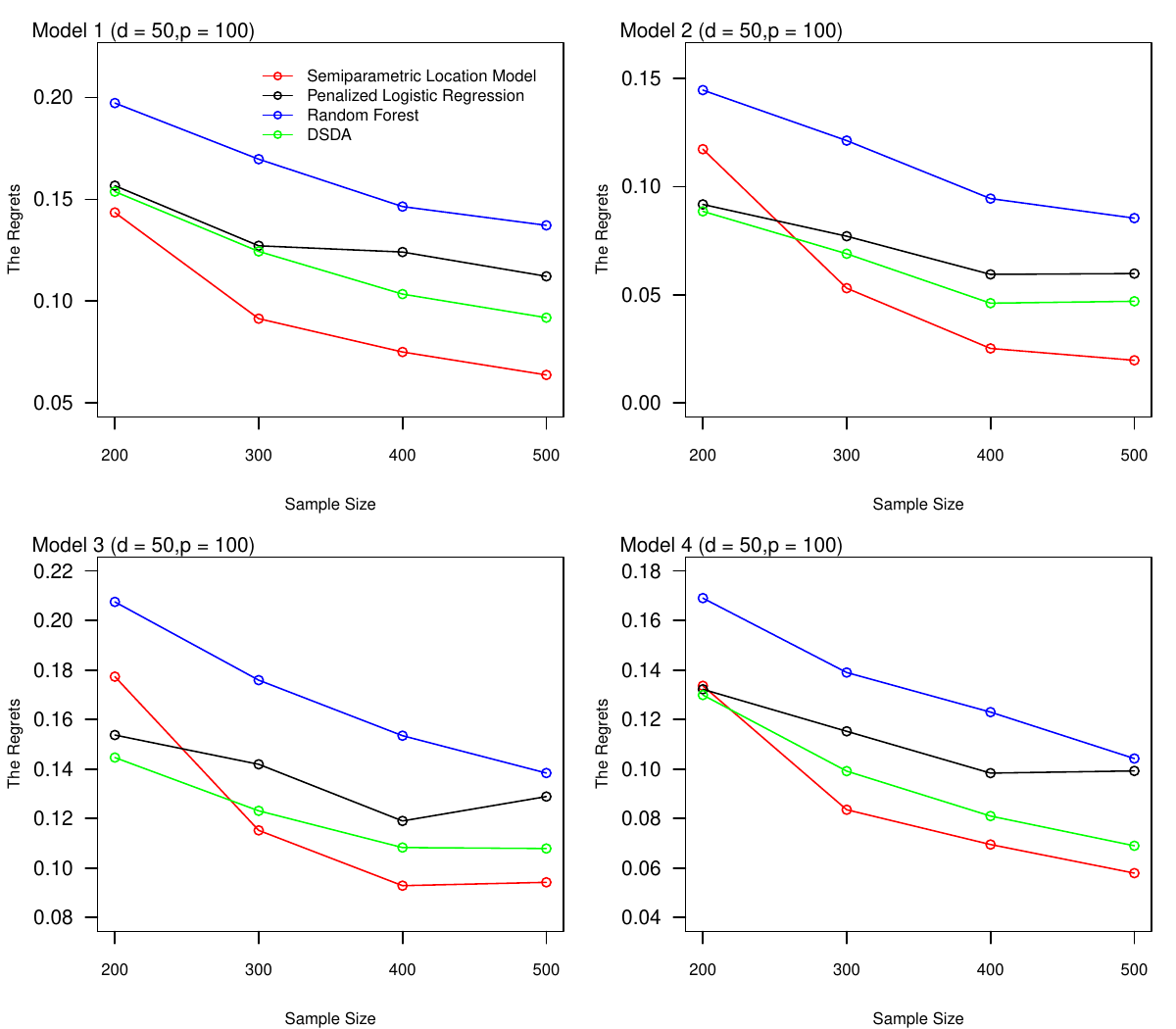}
			\caption{\small The regrets   of SLM,PLG,RF,DSDA under Model 1-4 }\label{figgene}
		\end{figure}
		
		\subsection{Real data analysis}
		In this section, we investigate the performance of the proposed SLM model by analyzing seven real datasets. We compare SLM to the other classifiers used in simulaiton. In addition, we compute the misclassification rates for  Classification Tree (CT), which to some degree can be viewed as a non-ensemble version of RF.  
		
		The real cases we studied include: Hepatocellular Carcinoma data, Cylinder bands data, Heart-Disease data, Australian credit card application data,   Hepatitis data and German Credit data. All these datasets are publicly available on the UCI Machine Learning Repository or the public data platform Kaggle. All categorical variables are translated into binary variables using dummy variable encoding.  Missingness in the categorical variable is treated as one category, and mean imputation is used for missing values of the continuous variables.
		We perform a 10-fold cross-validation and the average misclassification rates are reported in Table 2.  The datasets we considered are described as follows: 
		
		\noindent
		{\bf Hepatocellular Carcinoma dataset} Hepatocellular carcinoma (HCC) is the most common type of primary liver cancer
		in adults and the third leading cause of cancer-related death worldwide. This dataset was collected at a University Hospital in Portugal. It contains real clinical data of 165 patients diagnosed with HCC, in which only 102 patients finally survived. There are 22 continuous variables and 59 binary variables.  
		
		\noindent
		{\bf Breast Cancer Gene Expression Profiles (METABRIC)} Breast cancer is the most frequent cancer among women, and one of the leading causes of cancer deaths in females. 
		This dataset comes from the Molecular Taxonomy of Breast Cancer International
		Consortium (METABRIC) database. The original data was published on Nature Communications \citep{pereira2016somatic}. There are 1904 patients with breast cancer in this data.  489 mRNA  Z-scores for 331 genes, and  indicators of mutation for  173 genes are recorded. 
		To reduce the computational costs, following \cite{cai2011direct}, only 100 mRNA  Z-scores with the
		largest absolute values of the two sample t statistics are used.  
		
		\noindent
		{\bf Cylinder bands data} Cylinder bands data contains 20 categorical variables which were transformed into 465 binary variables via dummy variable encoding. There are 17 continuous variables, but we have removed variables ``ESA Amperage" and ``chrome content", owing to the fact that more than 95\% of the observations are taking a same value or having a missing value for these two variables. This dataset contains 277 instances.
		
		\noindent
		{\bf Heart-Disease data} The heart-disease dataset contains 13 attributes (which have been extracted from
		a larger set of 75). There are 7 categorical variables which can be transformed into 12 binary variables  and 6 continuous variables. This dataset contains 270 instances. And there are no missing values in this dataset.

		\noindent
		{\bf Hepatitis data} Hepatitis dataset contains 155 patients diagnosed with hepatitis, in which 123 patients are lived. Missingness in the categorical variable of this dataset is treated as one category. Overall, this dataset contains 7 continuous variables, and 12 categorical variables which were transformed into 22 binary variables. 
		
		\noindent
		{\bf Australian Credit Card Approval (ACA) data} This dataset concerns credit card applications. There are 6 numerical variables, and 8 categorical attributions which were transformed into 28 binary variables via dummy variable encoding.  This dataset contains 690 instances.
		
		\noindent
		{\bf German Credit data} In this German Credit dataset,  there are 7 continuous variables such as duration in month and credit amount, and 13 categorical variables which were transformed into 41 binary variables. The objective is to class a customer as a ``good" or ``bad" customer.
		This dataset contains 1000 instances.
		
				%
		
		\begin{table}[H]
			\centering
			\begin{tabular}{|l|l|l|l|l|l|l|l|}
				\hline
				~ & HCC & BC & Bands & Heart & Hepatitis & ACA & German \\ \hline
				SLM & {\bf 0.231} & {\bf 0.357} & 0.246 & 0.148 & {\bf 0.135} & 0.136 & {\bf 0.236} \\ \hline
				$\rm SLM_{LE}$ & 0.236 & 0.360 & {\bf0.220} & 0.156 & 0.148 & 0.139 & 0.242 \\ \hline
				$\rm SLM_{DE}$ & 0.242 &0.369 & 0.289 & {\bf 0.130} & 0.142 & 0.217 & 0.290 \\ \hline
				$\rm PLG_{inter}$  & 0.401  & 0.404  & 0.321  & 0.241  & 0.155  & 0.248  & 0.292  \\ \hline
				PLG & 0.316 & 0.388 & 0.318 & 0.177 & 0.160 & 0.181 & 0.270 \\ \hline
				RF & 0.303 & 0.371 & 0.231 & 0.156 & 0.141 & {\bf 0.129} & 0.251 \\ \hline
				DSDA & 0.249 & 0.365 & 0.237 & 0.152 & 0.174 & 0.148 & 0.257 \\ \hline
				CT & 0.376 & 0.420 & 0.310 & 0.181 & 0.238 & 0.139 & 0.283 \\ \hline
			\end{tabular}
			\caption{Classification errors for real data study under 10-fold cross-validation. The best two classifiers for each data set are highlighted in boldface.}\label{real1}
		\end{table}
		
		From Table 2 we observe that our method is the best classifier for six out of seven datasets, and is the second best for the ACA data. Random Forest performs well too, being the best classifier for the ACA dataset and among the best two classifiers in another four datasets. The CT method seems to be the worse classifier overall. 
		
		\section{Discussion}\label{Dis}
		In this paper we have proposed a semiparametric model based on the optimal Bayes rule for classifying data with high dimensional mixed variables. The oracle Bayes rule essentially relies on the joint distribution of $(\X,\U)$ in each class. In our approach, the joint distribution is formulated via the  distributions of $\X|\U$ and $\U$, where the discrete variable $\U$ is viewed as a random ``location" in $\{0,1\}^d$, and $\X|\U$ is assumed to be Gaussian as in classical LDA, with the mean  and covariance matrix  defined as functions of the location. We have shown that under this location model, the classification direction and the the intercept are separable. 
		To address the curse of dimensionality, we consider a general formulation where the classification direction is assumed to be varying smoothly over the locations, and the intercept is approximated via a first order (linear) expansion.      
		The estimation methods we have adopted in this paper have covered  (low order) parametric approximation, and nonparametric smoothing. Different combinations of the parametric and nonparametric approaches could result in different theories and performance in different applications. For example, for the illustrative example in the Supplementary, the optimal classifier can be achieved by adopting a first order linear form for both $\beta(\u)$ and $\eta(\u)$.
		
		Above all,  the key message we hope to convey in this paper is that 
		categorical variables and  continuous variables should be treated differently, and more dedicated modeling strategies should be considered in the presence of other complex structures such as high dimensionality. Apart from the semiparametric classifier we have developed, there are other possible variations that we can consider. First, we can impose different structures for $\beta(\u)$ and $\eta(\u)$. For example, similar to $\eta(\u)$, we can also adopt a  linear approximation for the classification direction $\beta(\u)$ and the mean and covariance functions. The resulting classifier will reduce to a classifier with linear effects $\X$, $\U$ and their interactions. Second, we can also consider relaxing the Gaussian assumption on $\X$ to  a copula model as in \cite{jiang2016high} and \cite{mai2015sparse}. Third, it is interesting to apply the idea of the location model to develop ensemble classifiers using for example random subspace for high dimensional data with mixed variables \citep{tian2021rase}.  Last but not least, the proposed idea can be extended to handle the case where $\X$ admits a matrix or tensor structure \citep{pan2018covariate}.  We leave these    for future exploration.

		\section{Appendix: Technical Lemmas and Proofs}
		
		\subsection{An illustrative example}\label{example}

		Consider a two-class classification problem with one binary covariate $U$ and one continuous covariate $X$. Assume that  
		the prior for the class label $L$ is balanced, i.e., $P(L=1)=P(L=2)=0.5$, and  let $P(U=0)=P(U=1)=0.5$ for both classes. 	For class 1, suppose  $X\sim N(-1,1)$ under category $U=0$ and $X\sim N(1,1)$ under category $U=1$. For class 2, assume that    $X\sim N( 1,1)$ under category $U=0$ and $X\sim N(-1,1)$ under category $U=1$.
		We have: 
		\begin{itemize}
			\item[(1)] The optimal Bayes risk is $\Phi(-1)$ where  $\Phi$ is the cumulative distribution function of the standard normal distribution. 
			\item[(2)] The misclassification rate for the optimal linear classifier is $\frac{1}{2}\left[\Phi(-1)+\frac{1}{2}\right]$. 
		\end{itemize}
		\begin{proof}
			(1).	With some simple calculations it can be shown that the optimal Bayes classifier is: 
			classify $(X,U)$ into class 1 if $XU>0$, and into class 2 otherwise. The corresponding misclassification rate can be easily established. 	
			
			(2). Without loss of generality, consider a classifier that classify  $(X,U)$ into class 1 if and only if $X+aU>b$ for some $a$ and $b$. Let $Z$ be a random variable  following the standard normal distribution. The misclassification rate of the linear classifier can be computed as:
			\begin{eqnarray*}
				R(a,b)&=&\frac{1}{4}P(X+aU>b|L=2, U=0)+\frac{1}{4}P(X+aU>b|L=2, U=1) \\
				&&+\frac{1}{4}P(X+aU<b|L=1, U=0)+\frac{1}{4}P(X+aU<b|L=1, U=1).
			\end{eqnarray*}
			Note that
			$P(X+aU>b|L=2, U=0)=P(X>b|X\sim N(1,1))=P(Z>b-1)$. Similarly, we have 
			$P(X+aU>b|L=2, U=1)=P(Z>b-a+1)$, $P(X+aU<b|L=1, U=0)=P(Z<b+1)$ and $P(X+aU<b|L=1, U=1)=P(Z<b-a-1)$. Consequently, we have
			\begin{eqnarray*}
				R(a,b)&=&\frac{1}{2}+\frac{1}{4}P(b-1<Z<b+1)-\frac{1}{4}P(b-a-1<Z<b-a+1)\\
				&\ge& \frac{1}{2}-\frac{1}{4}P( -1<Z< 1) \\
				&=& \frac{1}{4}+\frac{1}{2}\Phi(-1),
			\end{eqnarray*} 
			where the equal sign in the second inequality is obtained   when $a=b\rightarrow \infty$.
			
		\end{proof}
		
		\subsection{Motivation for model \eqref{binary3}} \label{sec51}
		
		Recall that $\eta(\u)=\frac{\pi_1p_1(\u)}{\pi_2p_2(\u)}$.
		Note that for any injective function $G(\u): \{0,1\}^d \mapsto R$, it can be easily shown by induction that $G(\u)$ can be written as a polynomial of $\u=(u_1,\ldots,u_d)\in \{0,1\}^d$. \\
		Specifically, there exist constants $a_0; a_1,\ldots, a_p; a_{1,2},a_{1,3},\ldots,a_{p-1,p};\ldots; a_{1,2\ldots,d}$, such that
		\begin{eqnarray*}\label{binary}
			G(\u)=a_{0}+\sum_{i=1}^da_iu_i+\sum_{1\leq i<j\leq d} a_{i,j}u_iu_j+\cdots+a_{1,2\ldots,d}u_1u_2\cdots u_d.
		\end{eqnarray*}
		Suppose  
		$\pi_1\log p_1(\u)$ and $\pi_2\log p_2(\u)$ are injective functions such that
		\begin{eqnarray}\label{binary2}
			&&\log \pi_jp_j(\u) \\
			&=&a_{0}^{(j)}+\sum_{i=1}^da_i^{(j)}u_i+\sum_{1\leq i<j\leq d}  a_{i,j}u_iu_j+\cdots+a_{1,2\ldots,d}u_1u_2\cdots u_d, ~~~j=1,2, \nonumber
		\end{eqnarray}
		where the intercepts  and first order coefficients between the two classes, $a_0^{(1)},  a_1^{(1)},\ldots, a_p^{(1)}$ and $a_0^{(2)}, a_1^{(2)}$ $,\ldots, a_p^{(2)}$, are potentially different, and the higher order coefficients $a_{1,2},\ldots, $
		$ a_{1,2\ldots,d}$, are assumed to be the same between the two classes. Correspondingly,  we have
		\begin{eqnarray}\label{binary3_2}
			\eta(\u)=\log \frac{\pi_1 p_{1}(\u)}{\pi_2 p_2(\u)}=A_{0}+\sum_{i=1}^dA_iu_i,
		\end{eqnarray}
		with $A_i=a_i^{(1)}-a_i^{(2)}, i=0,\ldots, d$.
		The assumption of common higher order coefficients between the two classes in \eqref{binary2} is analogous to the common covariance assumption in the classical LDA setting.
		Importantly, since all the $u_i$'s are binary variables, the probability of a higher order term $u_{i_1}u_{i_2}\cdots u_{i_k}$ taking the value 1 tends to zero in a geometric rate. Hence the expansion \eqref{binary} can in general be well approximated by  the lower order terms in practice. This is in contrast to the unusual lower order approximation widely applied in linear models where the rational is to achieve certain amount of parsimony. In particular, \eqref{binary3_2} is true if we only consider model \eqref{binary2} with the first two terms only.

		\subsection{ Proofs of Proposition \ref{p1}}
		\begin{proof}
			Given the classification direction $\b(\u)$, to minimize \eqref{MR}, it can be shown after some simple calculation that the optimal intercept
			is given as
			\[
			\tilde{b}_0(\u)= \frac{  {\b(\u)^\top\Sigma(\u)\b(\u)}}{ [\mu_1(\u) - \mu_2(\u)]^\top \b(\u) }\log\left(\frac{\pi_1p_1(\u)}{\pi_2p_2(\u)}\right).
			\]
			Plugging the above equation into \eqref{MR}, we obtain that, to minimize the expected misclassification rate, it is equivalent to find $\b(\u)$ that minimizes
			\begin{eqnarray}\label{prop1:eq1}
				&&R(D(\b(\u),\tilde{b}_0(\u)) \\
				&=& \sum_{\u\in \{0,1\}^d}\left[  \pi_1p_1(\u) \Phi\left(-\frac{\Delta}{2}   - \frac{\eta(\u)}{\Delta}\right) + \pi_2p_2(\u) \Phi\left(-\frac{\Delta}{2}  + \frac{\eta(\u)}{\Delta}\right)\right], \nonumber
			\end{eqnarray}
			where   $\eta(\u) = \log\left(\frac{\pi_1 p_1(\u)}{\pi_2 p_2(\u)}\right)$, and $\Delta = \frac{ [\mu_1(\u) - \mu_2(\u)]^\top \b(\u) }{ \sqrt{\b(\u)^\top\Sigma(\u)\b(\u)}}$.
			For a given $u\in \{0,1\}^d$, denote
			\begin{eqnarray*}
				R_\u(\Delta, \eta)= \pi_1p_1(\u)  \Phi\left(-\frac{\Delta}{2}  - \frac{\eta(\u)}{\Delta}\right) + \pi_2p_2(\u)\Phi\left(-\frac{\Delta}{2}  + \frac{\eta(\u)}{\Delta}\right).
			\end{eqnarray*}
			By taking the partial derivation with the respect to $\Delta$, it can be shown that
			\begin{align*}
				\frac{\partial R_\u(\Delta, c)}{\partial \Delta} =-\sqrt{\pi_1 p_1(\u) \pi_2(\u)p_2(\u) } \text{exp} \left\{ -\left( \frac{c^2(\u)}{\Delta^2} + \frac{1}{4}\Delta^2 \right)/2 \right\} < 0,
			\end{align*}
			which implies that to minimize the expected misclassification rate, it is equivalent to look for $\b(\u)$ that maximizers $\Delta(\u)$.
			Now let's consider the set of zero-intercept linear classifiers defined as in \eqref{zeroint} with $b_0(\u)=0$. Consequently \eqref{mr12} can be written as $R(1|2)=R(2|1)=\Phi\left(-\frac{\Delta(\u)}{2}\right)$, and the expected misclassification rate \eqref{MR} reduces to:
			\begin{equation*}
				R(\b(\u)) =\sum_{\u\in \{0,1\}^d}[\pi_1p_1(\u)+\pi_2p_2(\u)] \Phi\left(-\frac{\Delta(\u)}{2}\right)=\Phi\left(-\frac{\Delta(\u)}{2}\right),
			\end{equation*}
			which is also minimized when $\Delta(\u)$ is maximized. Consequently the optimal zero-intercept classification direction is also  the minimizer of \eqref{prop1:eq1}. 
			
			On the other hand, by the definition of $\eta(\u)$, we have:
			\[\eta(\u)=\log \frac{\pi_1 p_{1}(\u)}{\pi_2 p_2(\u)}=\log \frac{ P(\U=\u, L=1)}{P(\U=\u, L=2)}=\log \frac{ P(L=1| \U=\u)}{P(L=2|\U=\u)}.\] 
			This proves the  claim on $\eta(\u)$.
		\end{proof}
		
		\subsection{Proof of Theorem \ref{thmmu} } 
		
		Before we proceed to the proofs for the main theorems, we introduce some technical lemmas and establish concentration inequalities for the weighted estimators of the mean and covariance matrix functions.

		\begin{lemma}\label{Nu}
			Let $\N=(N_1,\ldots, N_d)^\top\in \{0,1\}^d$ be a random vector with mean $E\N=(p_1,\ldots, p_d)^\top$. Denote $W_i=N_i-p_i$, and assume that there exist  constants $C>0$ and $0\leq \alpha<\frac{1}{2}$ such that $E(\sum_{i=1}^dW_i)^k\leq Ck!d^{1+\alpha(k-2)}$ for any $k\geq 2$. For any $\epsilon>0$ such that $d^\alpha\epsilon\rightarrow 0$, we have, when $d$ is large enough,
			\begin{eqnarray}\label{concernN}
				P\left(\frac{1}{d}\left|\sum_{i=1}^d(N_i -p_i) \right|\geq  \epsilon\right)\leq 2\exp\{-C_1d\epsilon^2\}.	
			\end{eqnarray}
		\end{lemma}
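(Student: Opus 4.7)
The plan is a Chernoff--Bernstein argument driven by the moment generating function of $S_d := \sum_{i=1}^d W_i$, with the twist that the hypothesis controls only the raw moments $E[S_d^k]$ (and in particular is cleanest for even $k$), so I would work through $\cosh$ rather than a one-sided exponential. For any $\lambda>0$ with $\lambda d^\alpha<1$, the elementary inequality $e^{\lambda|S_d|}\le 2\cosh(\lambda S_d)$ combined with Markov gives
\[
P(|S_d|\ge d\epsilon)\ \le\ 2\,e^{-\lambda d\epsilon}\,E[\cosh(\lambda S_d)].
\]
Expanding $\cosh$ term by term and using the assumed bound $E[S_d^{2k}]\le C(2k)!\,d^{1+\alpha(2k-2)}$, the even-moment series telescopes to a geometric sum:
\[
E[\cosh(\lambda S_d)]\ \le\ 1+C d\lambda^2\sum_{j\ge 0}(\lambda d^\alpha)^{2j}\ =\ 1+\frac{Cd\lambda^2}{1-(\lambda d^\alpha)^2}.
\]
Applying $1+x\le e^x$ then bounds $P(|S_d|\ge d\epsilon)$ by $2\exp\!\bigl(-\lambda d\epsilon+Cd\lambda^2/(1-(\lambda d^\alpha)^2)\bigr)$.

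The second step is a one-variable optimization in $\lambda$. The hypothesis $d^\alpha\epsilon\to 0$ is precisely what makes the choice $\lambda=c\,\epsilon$ (say $c=1/(4C)$) legal in the asymptotic regime: then $\lambda d^\alpha=c\epsilon\,d^\alpha\to 0$, so for all $d$ large enough $\lambda d^\alpha\le 1/2$ and hence $1-(\lambda d^\alpha)^2\ge 3/4$. The exponent reduces to $-\lambda d\epsilon+O(d\lambda^2)$, and a short calculation with this specific $c$ shows it is at most $-C_1 d\epsilon^2$ for a constant $C_1$ depending only on $C$. This yields the claimed inequality with the factor $2$ coming from the $\cosh$ step.

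The main obstacle I anticipate is that the moment bound in the hypothesis is one-sided on the raw moments and does not directly translate into a bound on $E[(-S_d)^k]$ for odd $k$, so a naive Chernoff estimate using $E[e^{\lambda S_d}]$ would give the upper tail but not the lower. Routing through $\cosh$ sidesteps this entirely, because only even powers appear and raw and absolute moments coincide. A subtler bookkeeping point is to ensure that the radius-of-convergence condition $\lambda d^\alpha<1$ is compatible with the optimizing $\lambda\asymp\epsilon$ uniformly in $d$; this is precisely what the condition $d^\alpha\epsilon\to 0$ and the qualifier ``when $d$ is large enough'' in the statement are designed to furnish.
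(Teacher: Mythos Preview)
Your argument is correct and follows the same Chernoff--Bernstein template as the paper: bound the moment generating function via the series expansion using the hypothesis on $E[S_d^k]$, then optimize the free parameter at order $\epsilon$, with the constraint $d^\alpha\epsilon\to 0$ ensuring the optimizer lies inside the radius of convergence. The paper works with $E\exp\{tS_d/d\}$ directly, bounds it by $\exp\{Ct^2/(d(1-c))\}$ for $t\le cd^{1-\alpha}$, and minimizes at $t_0=(2C)^{-1}d(1-c)\epsilon$.

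Your route through $\cosh$ is a genuine refinement rather than a mere cosmetic change. As you correctly note, the hypothesis bounds $E[S_d^k]$ from above but says nothing about $E[(-S_d)^k]$ for odd $k$, so the paper's one-sided MGF argument, taken literally, only yields $P(d^{-1}S_d\ge\epsilon)\le\exp\{-C_1d\epsilon^2\}$; the two-sided statement with the absolute value and the factor $2$ is asserted but not explicitly justified there. Your $\cosh$ expansion uses only even moments, for which raw and absolute moments coincide, and delivers both tails in one stroke. The cost is negligible (a factor $2$ and a slightly different constant $C_1$), and the benefit is that the argument matches the stated hypothesis exactly without any implicit appeal to symmetry or interpolation of odd moments from even ones.
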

		\begin{proof}
			Note that for any
			$t\leq c d^{1-\alpha}$ with some constant $c<1$, we have,
			\begin{eqnarray*}
				E\exp\left\{\frac{t}{d}\sum_{i=1}^dW_i \right\}&=&1+\frac{t^2}{2!d^2}E\Big(\sum_{i=1}^dW_i \Big)^2+\frac{t^3}{3!d^3}E\Big(\sum_{i=1}^dW_i \Big)^3+\cdots \\
				&\leq&1+ \frac{Ct^2}{d}+\frac{Ct^3 }{d^{2-\alpha}}+\cdots\\
				&\leq&1+\frac{Ct^2}{ d(1-c)}\\
				&\leq&\exp\left\{\frac{Ct^2 }{ d(1-c)} \right\}.
			\end{eqnarray*}
			Consequently, by 
			the general form of the Chebyshev-Markov inequalities (c.f.  6.1.a of \cite{lin2011probability}), we have, for any $\epsilon>0$ such that $d^\alpha\epsilon\rightarrow 0$,  
			\begin{eqnarray*}
				P\left(\frac{1}{d}\sum_{i=1}^dW_i \geq \epsilon \right)\leq  E\exp\left\{\frac{t}{d}\sum_{i=1}^dW_i -t\epsilon\right\}\leq
				\exp\left\{\frac{Ct^2 }{ d(1-c)} -t\epsilon \right\}.
			\end{eqnarray*}
			Notice that the last term in the above inequality is minimized at $t_0=(2C)^{-1}{d(1-c)\epsilon}$. On the other hand,
			when $t=t_0$ we have $t=o(d^{1-\alpha})$. Consequently, when $d$ is large enough such that $t\leq cd^{1-\alpha}$, we have
			$
			P\left(\frac{1}{d}\sum_{i=1}^dW_i \geq \epsilon \right)\leq \exp\left\{-\frac{d(1-c)\epsilon^2 }{4C}  \right\}.
			$
			The lemma is proved by setting $C_1=\frac{1-c}{4C}$.
		\end{proof}
		
		We say $W_1,\ldots,W_d$ are $m$-dependent if for any $1\leq i\leq d$, $W_i$ is dependent to at most $m$ variables in $\{W_j: j\neq i, 1\leq j\leq d\}$. When  $W_1,\ldots,W_d$ are $m$-dependent for a constant $m$, we have
		$E(\sum_{i=1}^dW_i)^k\leq d(m+1)^{k-1}$. Hence Lemma  \ref{Nu} holds for $m$-dependent sequences with  $m=o(d^{\alpha})$.

		\begin{lemma}\label{Kexp}
			Under Conditions (C1) and (C2) we have, when $d$ is large enough, for any constant $C_2>B$ and $\u\in \{0,1\}^d$, there exists a constant $C_1>0$ such that,
			\begin{eqnarray}
				~~~~~~	E\exp\{-(hd)^{-1}N_\u \} &\geq&  \exp\{- (dh)^{-1}EN_\u\} 
				; \label{bounds1}\\
				E\exp\{-(hd)^{-1}N_\u \} &<& 2(n+d)^{-C_2}+   \exp\left\{- (dh)^{-1}EN_\u
				+h^{-1}\sqrt{\frac{C_1\log (d+n)}{d}}\right\} \nonumber\\
				&=&  \exp\{- (dh)^{-1}EN_\u\}\left(1+O\left(\sqrt{\frac{\log (d+n)}{h^2d}}\right)\right) .\label{bounds2}
			\end{eqnarray}
			where $B\in [0.5,1)$ is defined as in Condition 1.
		\end{lemma}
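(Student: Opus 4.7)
The lemma has two parts: the lower bound \eqref{bounds1} from convexity, and the upper bound \eqref{bounds2} from concentration of $N_\u$ around its mean. The lower bound is immediate from Jensen's inequality applied to the convex function $t\mapsto e^{-t/(hd)}$, which gives $E\exp\{-(hd)^{-1}N_\u\}\ge \exp\{-(hd)^{-1}EN_\u\}$.

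For the upper bound, the plan is a truncated-expectation argument. For a threshold $\epsilon>0$ to be chosen, I would split the expectation into the contributions from $\{|N_\u-EN_\u|<d\epsilon\}$ and its complement. On the small-deviation event, $N_\u>EN_\u-d\epsilon$ yields the bound $\exp\{-(hd)^{-1}EN_\u + h^{-1}\epsilon\}$ on the first piece. On the complement, since $N_\u\ge 0$ forces the integrand to be at most $1$, the second piece is bounded by $P(|N_\u-EN_\u|/d\ge \epsilon)$. I would then choose $\epsilon=\sqrt{C_1\log(d+n)/d}$ with $C_1$ large enough depending on $C_2$, and invoke Lemma \ref{Nu} applied to $N_\u-EN_\u=\sum_{i=1}^d(1-2u_i)W_\u^{(i)}$ to bound this probability by $2(d+n)^{-C_2}$. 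Adding the two pieces produces the first form of \eqref{bounds2}, with the bias term $h^{-1}\epsilon = h^{-1}\sqrt{C_1\log(d+n)/d}$ inside the exponential.

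The equivalent multiplicative form is then a Taylor expansion: Condition (C2) keeps $h^{-1}\sqrt{C_1\log(d+n)/d}$ small, so $\exp\{h^{-1}\sqrt{C_1\log(d+n)/d}\}=1+O(\sqrt{\log(d+n)/(h^2d)})$. The polynomial term $(d+n)^{-C_2}$ is absorbed into the $O(\cdot)$ correction; the assumption $C_2>B$, combined with $EN_\u\le(1-B)d$ from (C1), is used here to ensure $(d+n)^{-C_2}$ is small relative to the dominant exponential $\exp\{-(dh)^{-1}EN_\u\}$.

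\textbf{Main obstacle.} The main technical point is applying Lemma \ref{Nu} to the $\pm 1$-signed sum $N_\u-EN_\u = \sum_i(1-2u_i)W_\u^{(i)}$ rather than to the plain centered sum $\sum_i W_\u^{(i)}$ that is literally stated in Condition (C1). The Bernstein-style MGF argument in Lemma \ref{Nu} has to be carried through with $\pm 1$ coefficients; because the moment bound in (C1) controls moments of such reweighted sums up to constants (interpreting (C1) uniformly over sign reweightings, or using independence of the $U_i$), this extension goes through without essential change, and the side requirement $d^\alpha\epsilon\to 0$ needed by Lemma \ref{Nu} is routine since $\alpha<1/2$ makes $d^\alpha\sqrt{\log(d+n)/d}\to 0$ under Condition (C2).
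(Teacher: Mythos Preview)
Your proposal is correct and follows exactly the route the paper takes: Jensen's inequality for \eqref{bounds1}, and Lemma~\ref{Nu} with the choice $\epsilon=\sqrt{C_1\log(d+n)/d}$ for \eqref{bounds2}. The paper's own proof is two sentences; your truncated-expectation decomposition simply makes explicit the mechanism those two sentences summarize, and your observation about applying Lemma~\ref{Nu} to the signed sum $N_\u-EN_\u=\sum_i(1-2u_i)W_\u^{(i)}$ is a genuine (and correctly resolved) subtlety that the paper elides.
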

		\begin{proof}
			\eqref{bounds1} is a direct result of Jensen's inequality. 
			\eqref{bounds2} can be obtained from Lemma \ref{Nu} with $\epsilon=\sqrt{\frac{C_1\log (d+n)}{d}}$.
		\end{proof}

		Similar to Lemma 1 and Corollary 1 in \cite{hong2016asymptotic}, Lemma \ref{Kexp} above provides an analogue of Bochner's Lemma for infinite-dimensional discrete variables. However, the convergence will only be obtained when $\frac{\log (d+n)}{h^2d}\rightarrow 0$, which in return indicates that $h$ should not converge to zero with a rate faster than $O\Big(\sqrt{\frac{ \log (d+n)  }{d} }\Big)$. 

		\begin{lemma}\label{kernel}
			Under Conditions (C1)-(C3), we have, when $n_1$ and $n_2$ are large enough,
			for any $\v\in\{0,1\}^d$ and a bounded radius $r$, and any small enough $\epsilon_n>0$,  there exist constants $C_1>0$ and $C_2>0$ such that
			\begin{eqnarray*}
				P\left(\sup_{{\bf u}\in B_\v(r)}\Bigg|\frac{1}{n_1}\sum_{j = 1}^{n_1}   \frac{\exp\{-(dh_x)^{-1}|\U_{ j} - \u|_1\}}{E\exp\{-(dh_x)^{-1}|\U_{ 1} - \u|_1\}}- 1\Bigg|\geq \epsilon_n\right)
				\leq 	C_1d^r \exp\left\{-C_2n_1\epsilon_n^2\right\}.
			\end{eqnarray*}
			and
			\begin{eqnarray*}
				P\left(\sup_{{\bf u}\in B_\v(r)}\Bigg|\frac{1}{n_2}\sum_{j=1}^{n_2}\frac{\exp\{-(dh_y)^{-1}|\V_{ j} - \u|_1\}}{E\exp\{-(dh_y)^{-1}|\V_{1} - \u|_1\}}- 1\Bigg|\geq \epsilon_n\right)
				\leq  C_1d^r \exp\left\{-C_2n_2\epsilon_n^2\right\}.
			\end{eqnarray*}
		\end{lemma}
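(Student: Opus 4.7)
The plan is to establish pointwise concentration at each $\u \in B_\v(r)$ via Bernstein's inequality after a truncation step, and then take a union bound over $B_\v(r)$, whose cardinality is at most $\sum_{k=0}^{r}\binom{d}{k} = O(d^r)$ for bounded $r$ and large $d$. Let $Z_j(\u) := \exp\{-(dh_x)^{-1}N_{\u,j}\}/E\exp\{-(dh_x)^{-1}N_{\u,1}\}$ with $N_{\u,j} := |\U_j - \u|_1$; these are i.i.d.\ with mean $1$. The main technical issue is that $Z_j(\u)$ need not be bounded by a constant: Lemma \ref{Kexp} gives at best $Z_j(\u) \leq \exp\{h_x^{-1}(1-B)\}$, which diverges as $h_x\to 0$. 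I therefore use truncation calibrated to Condition (C2).

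Put $\bar W_j := (N_{\u,j} - EN_\u)/d$ and fix a constant $c_0$ to be chosen. Define the truncation event $B_j := \{|\bar W_j| \leq c_0 h_x\}$. By Lemma \ref{Nu}, $P(B_j^c) \leq 2\exp\{-C_1 c_0^2 d h_x^2\}$, and Condition (C2) ensures that $d h_x^2$ is at least of order $\log(d+n)$, so $P(B_j^c) \leq 2(d+n)^{-C c_0^2}$, polynomially small in $(d+n)^{-1}$ for any desired exponent by taking $c_0$ large. On $B_j$ the variable $Z_j(\u)$ is bounded by the universal constant $e^{c_0}/C_\u$, where $C_\u := E\exp\{-h_x^{-1}\bar W_j\} = 1+o(1)$ by Lemma \ref{Kexp}. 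The second moment of $Z_j(\u)$ is controlled by applying Lemma \ref{Kexp} to the numerator with bandwidth $h_x/2$, giving $E Z_j(\u)^2 \leq 1+o(1)$, whence $\mathrm{Var}(Z_j(\u) I(B_j)) \leq 2$ for $n$ large. Bernstein's inequality applied to the bounded, centered, i.i.d.\ summands $Z_j(\u) I(B_j) - E[Z_j(\u) I(B_j)]$ gives
\begin{equation*}
P\Bigl(\bigl|\tfrac{1}{n_1}\textstyle\sum_{j=1}^{n_1}(Z_j(\u) I(B_j) - E[Z_j(\u) I(B_j)])\bigr| > \epsilon_n/2\Bigr) \leq 2\exp\{-C_2 n_1\epsilon_n^2\}
\end{equation*}
for all sufficiently small $\epsilon_n$. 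Cauchy--Schwarz gives $|E[Z_j(\u) I(B_j)] - 1| \leq \sqrt{EZ_j(\u)^2 \cdot P(B_j^c)} = o(\epsilon_n)$, and a union bound over $j$ shows $P(\bigcup_j B_j^c) \leq 2 n_1 (d+n)^{-C c_0^2}$; both terms are absorbed into the Bernstein exponent by choosing $c_0$ large.

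Finally, I take a union bound over $\u \in B_\v(r)$, introducing the factor $|B_\v(r)| \leq C d^r$ and yielding the first inequality. The argument for the second inequality is identical, with $h_x$ replaced by $h_y$ and $n_1$ by $n_2$. The main obstacle is the truncation calibration: Condition (C2), which requires $h^{-2}$ to be at most of order $d/\log(d+n)$, is precisely what permits the choice $\tau = c_0 h_x$ so that the truncated $Z_j(\u)$ is bounded by a constant while $P(|\bar W_j| > \tau)$ remains polynomially small in $(d+n)$. Together with the uniform control $EN_\u \in [Bd, (1-B)d]$ from Condition (C1), which keeps $C_\u$ and all derived constants uniform in $\u$, this yields the stated bound.
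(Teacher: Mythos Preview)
Your proof is correct and shares the paper's overall architecture---pointwise sub-Gaussian concentration via the second-moment bound $EZ_j(\u)^2\le C$ (obtained from Lemma~\ref{Kexp} exactly as you indicate), followed by a union bound over $|B_\v(r)|\le d^r/r!$---but the concentration step itself is handled differently. The paper does not truncate: it bounds the moment generating function directly using $e^x\le 1+x+x^2$ for $x$ below a fixed threshold, together with $EW_j(\u)=0$ and $EW_j(\u)^2\le Cn_1^{-2}$, and then optimizes in $t$ to get $2\exp\{-n_1\epsilon_n^2/(4C)\}$. Your truncation-plus-Bernstein route is more explicit about the boundedness issue that the paper's use of $e^x\le1+x+x^2$ leaves implicit (that inequality is applied inside the expectation, so it tacitly needs $tW_j(\u)$ bounded above almost surely, which is precisely what your truncation at level $c_0h_x$ enforces via Condition~(C2)). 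So your argument is arguably more rigorous, at the cost of a few extra lines. One minor caveat: invoking Lemma~\ref{Nu} with $\epsilon=c_0h_x$ requires $d^\alpha h_x\to0$, which (C2) alone does not guarantee when $h_x$ decays slowly; but in that regime $Z_j(\u)\le\exp\{(1-B)/h_x\}$ is already bounded by a constant and no truncation is needed, so the gap is easily closed.
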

		\begin{proof}	
			Denote $W_j({\bf u}):=n_1^{-1}\big(\frac{\exp\{-(dh_x)^{-1}|\U_{j} - \u|_1\}}{E\exp\{-(dh_x)^{-1}|\U_{ 1} - \u|_1\}} -1\big)$.  From Condition 2 and Lemma \ref{Kexp}, we have,
			$
			EW_j({\bf u})^2\leq  n_1^{-2} C
			$
			for some constant $C>0$. By Doob's submartingale inequality we have,
			for any ${\bf u}\in \{0,1\}^d$, and any   $t>0$ such that $n_1^{-1}t$ is small enough,
			\begin{eqnarray}\label{lamm(C1).1}
				&&P\left(\Bigg|\frac{1}{n_1}\sum_{j = 1}^{n_1}   \frac{\exp\{-(dh_x)^{-1}|\U_{ j} - \u|_1\}}{E\exp\{-(dh_x)^{-1}|\U_{ 1} - \u|_1\}}- 1\Bigg|>\epsilon_n \right) \\
				&\leq&  2 \exp\{-t\epsilon_n\}\Pi_{j=1}^{n_1}E  \exp\{tW_j({\bf u})\}       \nonumber\\
				&\leq& 2 \exp\{-t\epsilon_n\}\Pi_{j=1}^{n_1}\{1+t^2EW_j({\bf u})^2 \}  \nonumber\\
				&\leq& 2\exp\left\{-t\epsilon_n+\sum_{j=1}^{n_1}t^2E W_j({\bf u})^2\right\} \nonumber\\
				&\leq&  2\exp\left\{-t\epsilon_n+\frac{t^2C}{n_1 }  \right\}.  
			\end{eqnarray}
			Here in the second inequality we have used the fact that $EW_j(\u)=0$ and $e^x\leq 1+x+x^2$ when $x>0$ is small enough. 	
			By setting $t=(2C)^{-1}n_1\epsilon_n$, we have:
			\begin{equation*}\label{lemm(C1).2}
				P\left(\Bigg|\frac{1}{n_1}\sum_{j = 1}^{n_1}   \frac{\exp\{-(dh_x)^{-1}|\U_{ j} - \u|_1\}}{E\exp\{-(dh_x)^{-1}|\U_{ 1} - \u|_1\}}- 1\Bigg|>\epsilon_n \right)  \leq 2\exp\left\{-\frac{n_1\epsilon_n^2}{4C}\right\}.
			\end{equation*}
			
			By noticing that the cardinality of $B_\v(r)$ is less than $d^r/r!$, we have
			\begin{eqnarray*}\label{lemm(C1).6}
				P\left(\sup_{{\bf u}\in B_\v(r)}\Bigg|\frac{1}{n_1}\sum_{j = 1}^{n_1}   \frac{\exp\{-(dh_x)^{-1}|\U_{ j} - \u|_1\}}{E\exp\{-(dh_x)^{-1}|\U_{ 1} - \u|_1\}}- 1\Bigg|\geq \epsilon_n\right) \leq \frac{2d^r}{r!}\exp\left\{-\frac{n_1\epsilon_n^2}{4C}\right\}.
			\end{eqnarray*}
			This proves the first statement of the lemma. The second statement can be obtained similarly.
		\end{proof}

		\noindent
		{\bf Proof of Theorem \ref{thmmu} } 
		\begin{proof}
			Denote
			\[ W_{ji}(\u):=  {n_1}^{-1}\left[\frac{\exp\{-(dh_x)^{-1}|\U_{j} - \u|_1\}X_{ji}}{E\exp\{-(dh_x)^{-1}|\U_{1} - \u|_1\}} - \mu_{1i}(\u)\right],\quad {\rm and}
			\]
			
			\[
			B_{i}(\u):=\frac{E\exp\{-(dh_x)^{-1}|\U_{1} - \u|_1\}X_{1i}(\u)}{E\exp\{-(dh_x)^{-1}|\U_{1} - \u|_1\}}-\mu_{1i}(\u).
			\]
			Similar to \eqref{lamm(C1).1}, for any $0 < t < T$, where $T$ is a small enough constant,  we have,
			\begin{align*}
				&P\left( \left|\frac{1}{n_1} \sum_{j=1}^{n_1} \frac{\exp\{-(dh_x)^{-1}|\U_{j} - \u|_1\}X_{ji} }{E\exp\{-(dh_x)^{-1}|\U_{1} - \u|_1\}} -  \mu_{1i}(\u) \right|  > \epsilon_n \right)\\
				&\leq 2\exp\left\{-t \epsilon_n+tB_{i}(\u)+{t^2}\sum_{j=1}^{n_1}E(W_{ji}(\u))^2\right\}.
			\end{align*}
			From Condition (C3) we have $B_{i}(\u)= \kappa_{\u}(h_x)$. On the other hand, by Lemma \ref{Kexp} and Condition C4, we have,
			\begin{eqnarray*}\label{varianceK}
				E(n_1W_{ji}(\u))^2&\leq& 2E\left|\frac{\exp\{-(dh_x)^{-1}|\U_{j} - \u|_1\}X_{ji}}{E\exp\{-(dh_x)^{-1}|\U_{1} - \u|_1\}}  - \frac{E\exp\{-(dh_x)^{-1}|\U_{1} - \u|_1\}\mu_{1i}(\U_1)}{E\exp\{-(dh_x)^{-1}|\U_{1} - \u|_1\}} \right|^2
				\nonumber	\\
				&&+2\left| \frac{E\exp\{-(dh_x)^{-1}|\U_{1} - \u|_1\}\mu_{1i}(\U_1)}{E\exp\{-(dh_x)^{-1}|\U_{1} - \u|_1\}}-\mu_{1i}(\u)\right|^2 \\ &\leq&C  ,
			\end{eqnarray*}
			for some large enough constant $C>0$. Consequently, we have
			\begin{eqnarray*}
				&& P\left( \left|\frac{1}{n_1} \sum_{j=1}^{n_1} \frac{\exp\{-(dh_x)^{-1}|\U_{j} - \u|_1\}X_{ji} }{E\exp\{-(dh_x)^{-1}|\U_{1} - \u|_1\}} -  \mu_{1i}(\u) \right|  > \epsilon_n \right) \\
				&\leq& 2\exp\left\{-t \epsilon_n+t\kappa(h_x)+\frac{C t^2}{n_1} \right\}\\
				&\leq& 2\exp \left\{-\frac{n_1(\epsilon_n-\kappa(h_x))^2}{4C } \right\}.
			\end{eqnarray*}
			Together with Lemma \ref{kernel},  we have
			\begin{align*}
				P\left(  \underset{1 \leq i \leq p,\u\in B_\v(r)}{\sup} \left| \hat\mu_{1i}(\u) - \mu_{1{i}}(\u)\right|  > \epsilon_n \right) \leq C_1pd^r \exp\left\{-C_2{n_1(\epsilon_n-\kappa(h_x))^2}  \right\}.
			\end{align*}
			This proves the first statement of the theorem. The second statement can be
			proved similarly.
		\end{proof}	
		
		The proof for Theorem \ref{thmSig} is similar to the proof of Theorem \ref{thmmu} provided above, and is hence omitted. 
		
		\subsection{Proof of Proposition \ref{lem4} }
		
		\begin{proof}
			Given the true support $\supp$,  we consider the estimation
			\begin{eqnarray*}
				\hat{\b}^0(\u)&=&\argmin_{\b \in R^q,~\b_{\supp^c}=0} \frac{1}{2} \b^\top \hat{\bOmega}(\u) \b-\hat{\a} (\u) ^\top\b+\lambda |\b|_1\\
				&=&\argmin_{\b \in R^q,~\b_{\supp^c}=0}   \frac{1}{2}  \b_{\supp} ^\top \hat{\bOmega}_{\supp,\supp}(\u)  \b_{\supp}-\hat{\a}_{\supp}(\u) ^\top \b_{\supp}+\lambda |\b_{\supp}|_1.
			\end{eqnarray*}
			By the Karush-Kuhn-Tucker (KKT) condition, we have
			\begin{eqnarray} \label{la6}
				\hat{\bOmega}_{\supp,\supp}(\u)  \hat{\b}^0_{\supp}(\u)-\hat{\a}_{\supp} (\u) =-\lambda \Z,
			\end{eqnarray}
			where $\Z$ is the sub-gradient of $|\b _{\supp}|_1$. 
			By the definition of $\b^*(\u):=\bOmega(\u)^{-1}\a(\u)$, we have
			\begin{eqnarray*}
				\begin{pmatrix}
					\a_{\supp}(\u)\\
					\a_{\supp^c}(\u)
				\end{pmatrix}=\begin{pmatrix}
					\bOmega_{\supp,\supp}(\u)& \bOmega_{\supp,\supp^c}(\u)\\
					\bOmega_{\supp^c,\supp}(\u)& \bOmega_{\supp^c,\supp^c}(\u)\\
				\end{pmatrix} \begin{pmatrix}
					\b^\ast_{\supp}(\u)\\
					\textbf{0}\\
				\end{pmatrix}=\begin{pmatrix}
					\bOmega_{\supp,\supp}(\u) \b^\ast_{\supp}(\u)\\
					\bOmega_{\supp^c,\supp}(\u) \b^\ast_{\supp}(\u)
				\end{pmatrix},
			\end{eqnarray*}
			and hence we have
			$
			\hat{\bOmega}_{\supp,\supp}(\u) \hat{\b}^0_{\supp}(\u)-\bOmega_{\supp,\supp}(\u) \b^\ast_{\supp}(\u)+\a_{\supp}(\u)-\hat{\a}_{\supp}(\u)=-\lambda \Z.
			$
			Consequently, we have,
			\begin{eqnarray}\label{l(C3)3}
				&&	\hat{\b}^0_{\supp}(\u)-\b^\ast_{\supp}(\u)\\
				&=&-\bOmega_{\supp,\supp}(\u)^{-1} \left\{ \lambda \Z+[\hat{\bOmega}_{\supp,\supp}(\u)-\bOmega_{\supp,\supp}(\u)]\hat{\b}^0_{\supp}(\u)+(\a_{\supp}(\u)-\hat{\a}_{\supp}(\u))\right\}. \nonumber
			\end{eqnarray}
			By the triangle inequality, we have
			\begin{eqnarray*}
				&&	\|\hat{\b}^0_{\supp}(\u)-\b^\ast_{\supp}(\u)\|_\infty  \\
				&\leq&  \|\bOmega_{\supp,\supp}(\u)^{-1}\|_{L}\Big\{\lambda  \|\Z\|_{\infty}+ \|[\hat{\bOmega}_{\supp,\supp}(\u)-\bOmega_{\supp,\supp}(\u)](\hat{\b}^0_{\supp}(\u)-\b^\ast_{\supp}(\u))\|_\infty\\
				&&+\|[\hat{\bOmega}_{\supp,\supp}(\u)-\bOmega_{\supp,\supp}(\u)]\b_{\supp}^\ast(\u)+\a_{\supp}(\u)-\hat{\a}_{\supp}(\u) \|_{\infty}\Big\}\\
				& \leq&   \|\bOmega_{\supp,\supp}(\u)^{-1}\|_{L} \Big\{ \lambda+\|\b^{\ast}(\u)\|_0 \|\hat{\bOmega}(\u)-\bOmega(\u)\|_{\infty} \|\hat{\b}^0_{\supp}(\u)-\b^\ast_{\supp}(\u)\|_\infty\\
				&&+\|(\hat{\bOmega}(\u)-\bOmega(\u))\b^\ast(\u)+\a(\u)-\hat{\a}(\u) \|_{\infty}\Big\},
			\end{eqnarray*}
			which implies that
			\begin{eqnarray} \label{l(C3)}
				&& \|\hat{\b}^0_{\supp}(\u)-\b^\ast_{\supp}(\u)\|_\infty   \\
				&\leq&  [1-\|\b^*(u)\|_0\|\bOmega_{\supp,\supp}(\u)^{-1}\|_L \|\hat{\bOmega}(\u)-\bOmega(\u)\|_\infty ]^{-1}\|\bOmega_{\supp,\supp}(\u)^{-1}\|_L(\lambda+\epsilon_\u)\nonumber \\
				&\leq &2[1-\|\b^*(u)\|_0\|\bOmega_{\supp,\supp}(\u)^{-1}\|_L \|\hat{\bOmega}(\u)-\bOmega(\u)\|_\infty ]^{-1}\|\bOmega_{\supp,\supp}(\u)^{-1}\|_L\lambda,\nonumber
			\end{eqnarray}
			where in the last inequality we have used the fact (from the conditions) that $\epsilon_\u<\lambda$.
			Next, we  complete the proof by showing that
			$\hat{\b}^0(\u)$ is exactly the minimizer of \eqref{lemmaQL}.
			By the KKT condition, it is sufficient to show
			\begin{eqnarray}
			&& \|(\hat\bOmega(\u) \hat{\b}^0(\u)-\hat{\a}(\u))_{\supp}\|_{\infty} \leq \lambda, \textrm{ and }\label{l(C1)}\\
			&& \|(\hat\bOmega(\u) \hat{\b}^0(\u)-\hat{\a}(\u))_{\supp^c}\|_{\infty} <\lambda. \label{l(C2)}
		\end{eqnarray}
		\eqref{l(C1)} is a direct result of  \eqref{la6}. For \eqref{l(C2)}, we have
		\begin{eqnarray*}
			&&(\hat\bOmega(\u) \hat{\b}^0(\u)-\hat{\a}(\u))_{\supp^c}\\&=&\hat{\bOmega}_{\supp^c,\supp}(\u) \hat{\b}^0_{\supp}(\u)-\hat{\a}_{\supp^c} (\u) \\
			&=& \hat{\bOmega}_{\supp^c,\supp}(\u) \hat{\b}^0_{\supp}(\u)-\bOmega_{\supp^c,\supp}(\u) \b^\ast_{\supp}(\u)+\a_{\supp^c}(\u)-\hat{\a}_{\supp^c}(\u)\\
			&=&\hat{\bOmega}_{\supp^c,\supp}(\u) (\hat{\b}^0_{\supp}(\u)-\b^\ast_{\supp}(\u))+[\hat{\bOmega}_{\supp^c,\supp}(\u)-\bOmega_{\supp^c,\supp}(\u)]\b^\ast_{\supp}(\u)+\a_{\supp^c}(\u)-\hat{\a}_{\supp^c}(\u)\\
			&=&[\hat{\bOmega}_{\supp^c,\supp}(\u)-\bOmega_{\supp^c,\supp}(\u)](\hat{\b}^0_{\supp}(\u)-\b^\ast_{\supp}(\u))+\bOmega_{\supp^c,\supp}(\u)\bOmega_{\supp,\supp}(\u)^{-1} \{\bOmega_{\supp,\supp}(\u)(\hat{\b}^0_{\supp}(\u)-\b^\ast_{\supp}(\u))\}\\
			&&+\{(\hat{\bOmega}(\u)-\bOmega(\u))\b^\ast(\u)+\a(\u)-\hat{\a}(\u)\}_{\supp^c}.
		\end{eqnarray*}
		Together with \eqref{l(C3)3} and \eqref{l(C3)}, we have
		\begin{eqnarray*}
			&&\|(\hat\bOmega(\u) \hat{\b}^0(\u)-\hat{\a}(\u))_{\supp^c}\|_{\infty}  \\
			& \leq & \|\b^{\ast}(\u)\|_0 \|\hat \bOmega(\u)-\bOmega(\u)\|_{\infty} \|\hat{\b}^0_{\supp}(\u)-\b^\ast_{\supp}(\u)\|_{\infty}\\
			&& +\|\bOmega_{\supp^c,\supp}(\u)\bOmega_{\supp,\supp}(\u)^{-1}\|_L(\lambda+\epsilon_\u+ \|\b^{\ast}(\u)\|_0 \|\hat \bOmega(\u)-\bOmega(\u)\|_{\infty} \|\hat{\b}^0_{\supp}(\u)-\b^\ast_{\supp}(\u)\|_{\infty}) +\epsilon_\u\\
			& \leq & \frac{ (1+ \|\bOmega_{\supp^c,\supp}(\u))\bOmega_{\supp,\supp}(\u))^{-1}\|_L)(\lambda+\epsilon_\u)}{1-\|\b^{\ast}(\u))\|_0 \|\bOmega_{\supp,\supp}(\u))^{-1}\|_{L} \|\hat{\bOmega}(\u))-\bOmega(\u))\|_{\infty}}-\lambda\\
			&=& \lambda+\Bigg\{\epsilon_\u-\frac{1- \|\bOmega_{\supp^c,\supp}(\u))\bOmega_{\supp,\supp}(\u))^{-1}\|_L-2\|\b^{\ast}(\u))\|_0 \|\bOmega_{\supp,\supp}(\u))^{-1}\|_{L} \|\hat{\bOmega}(\u))-\bOmega(\u))\|_{\infty}}{1+ |\bOmega_{\supp^c,\supp}(\u))\bOmega_{\supp,\supp}(\u))^{-1}\|_L} \lambda\Bigg\}\\
			&&
			\hspace{.8cm}\times\left\{ \frac{1+ \|\bOmega_{\supp^c,\supp}(\u)\bOmega_{\supp,\supp}(\u)^{-1}\|_L}{1-\|\b^{\ast(\u)}\|_0 \|\bOmega_{\supp,\supp}(\u)^{-1}\|_{L} \|\hat{\bOmega}(\u)-\bOmega(\u)\|_{\infty}}\right\}.
		\end{eqnarray*}
		Under the Condition that
		\[ {2[{1-\|\bOmega_{\supp^c,\supp}(\u)\bOmega_{\supp,\supp}(\u)^{-1}\|_L-2\|\b^\ast(\u) \|_0 \|\bOmega_{\supp,\supp}(\u)^{-1}\|_{L} \|\hat{\bOmega}(\u)-\bOmega(\u)\|_{\infty} }]^{-1}\epsilon_\u}<\lambda,\]
		we have
		$
		\|(\hat\bOmega(\u) \hat{\b}^0(\u)-\hat{\a}(\u))_{\supp^c}\|_{\infty}<\lambda.
		$
		Consequently, $\hat{\b}(\u)=\hat{\b}^0(\u)$. 
		Lastly, note that 	the inequality   conditions  in this proposition hold for all $\u\in B_\v(r)$, we conclude that  the model selection consistency and the bound \eqref{l(C3)} hold for all $\u\in B_\v(r)$.
	\end{proof}
	
	\subsection{Proof of  Theorem \ref{thmbeta} }
	\begin{proof}
		
		Set $\epsilon_n=C\sqrt{\frac{\log (p+d)}{n}}+\kappa(h)$ for some large enough constant $C>0$. From Theorems \ref{thmmu} and  \ref{thmSig}  we  have when $C$ is large enough, with probability larger than $1-O((p+d)^{-1})$,  $\sup_{\u\in B_\v(r)}\|\hat{\mu}_1(\u)-\mu_1(\u)\|_\infty=O(\epsilon_n)$ and $\|\hat{\Sigma}(\u)-\Sigma(\u)\|_\infty=O(\epsilon_n)$.
		Consequently,   when $C>$ is large enough and $\epsilon_n\rightarrow 0$, we have
		\begin{eqnarray*}
			\sup_{\u\in B_\v(r)}\{\|\Sigma_{\supp^c,\supp}(\u)\Sigma_{\supp,\supp}(\u)^{-1}\|_L+2\|\beta(\u)\|_0\|\Sigma_{\supp,\supp}(\u)^{-1}\|_L \|\hat{\Sigma}(\u)-\Sigma(\u)\|_\infty\}<1, 
		\end{eqnarray*}
		and	
		\begin{eqnarray*}
			&&\sup_{\u\in B_\v(r)}2[1-\|\Sigma_{\supp^c,\supp}(\u)\Sigma_{\supp,\supp}(\u)^{-1}\|_L-2\|\beta(\u)\|_0\|\Sigma_{\supp,\supp}(\u)^{-1}\|_L \|\hat{\Sigma}(\u)-\Sigma(\u)\|_\infty ]^{-1}\epsilon_\u\\
			&&<\lambda_\beta.
		\end{eqnarray*}
		The theorem then follows from Proposition \ref{lem4}.
	\end{proof}
	
	\subsection{Proof of  Theorem \ref{eta_consistency} }
	\begin{proof}
		Denote the objective function on the right hand side of \eqref{etaestimation} as $L_n(\eta)$.  The true linear function $\eta(\W)=A_0+\A^\top\W $ is the minimizer of the  expected risk:
		\begin{eqnarray*}
			El(\tilde\eta; \W, L)) =  E\{- (2-L) \tilde\eta(\W)+ \log(1+\exp\{\tilde\eta(\W)\}) \} .
		\end{eqnarray*}
		Let $\hat{\cal M}:=\{j: \hat{A}_j\ne 0\}$.  
		Using similar arguments as in the proof of Proposition \ref{lem4}, we can first show that under Conditions (C5) and (C6), $P(\hat{\cal M}={\cal M})\rightarrow 1$. For convenience we shall assume hereafter in this proof that $\hat{\cal M}={\cal M}$.
		
		Let ${\cal B}_\eta(\delta)$ be the set of linear functions  $\bar\eta (\w)=\bar A_0+\bar \A_1^\top \w$ , such that $\sum_{i=0}^d |\bar A_i-A_i| =\delta$  for some $\delta>0$.   
		Here $\bar A_{i}$  is the $i$-th element  of $\bar\A_1$, and $\bar A_i=0$ for $i\in {\cal M}^c$. 
		With some abuse of notations, let $W_i=\U_i, \tilde{W}_i =\binom 1 {W_i}$ for $i=1\ldots, n_1$, $W_{n_1+i}={\V_i}, \tilde{W}_{n_1+i}=\binom 1 {W_{n_1+i}}$ for $i=1\ldots, n_2$, and denote $\W_n=(\tilde{W}_1 ,\ldots,   \tilde{W}_n )$. 
		The Hessian matrix of $L_n$ is then given as:
		\[
		H_n({\eta})=\sum_{i=1}^n \tilde{W}_i\tilde{W}_i^\top \frac{\exp\{\eta(W_i)\}}{(1+\exp\{\eta(W_i)\})^2}.
		\]
		By Taylor expansion we have, there exists an $\eta^*(\w)=A_0^*+\w^\top\A_1^*$, such that $A_0^*\in [\bar{A}_0, A_0]$, $A_i^*\in [\bar{A}_i, A_i]$ for $i\in {\cal M}$,   $A_i^*=0$ for $i\in {\cal M}^c$, and
		\begin{eqnarray*}
			L_n(\bar\eta  )=L_n(\eta ) + L'_n(\eta)\binom{\bar{A}_0-A_0}{\bar{\A}_1-\A_1}+
			\binom{\bar{A}_0-A_0}{\bar{\A}_1-\A_1}^\top H_n({\eta}^*) \binom{\bar{A}_0-A_0}{\bar{\A}_1-\A_1}.
		\end{eqnarray*}	
		Note that from Bernstein's inequality \citep{lin2011probability} and the fact that $EL'_n(\eta)={\bf 0}$, we have, there exists a constant $C>0$ such that with probability larger than $1-O(d^{-1})$, 
		\[
		\left|L'_n(\eta)\binom{\bar{A}_0-A_0}{\bar{\A}_1-\A_1}\right|\le C\delta M\sqrt{\frac{\log d}{n}},\]
		holds for all $\bar\eta \in {\cal B}_\eta(\delta)$. On the other hand, using Condition (C5) and the fact that 
		$(\bar{A}_0-A_0)^2+\|\bar{\A}_1-\A_1\|_2^2\ge M^{-1}\delta^2$,
		we have with probability larger than $1-O(d^{-1})$, 
		\[
		\binom{\bar{A}_0-A_0}{\bar{\A}_1-\A_1}^\top H_n({\eta}^*) \binom{\bar{A}_0-A_0}{\bar{\A}_1-\A_1}\ge \frac{\delta^2}{Me^M},
		\]  
		holds for all $\bar\eta \in {\cal B}_\eta(\delta)$. Consequently, by choosing $\delta=C_1M^2e^M\sqrt{\frac{\log d}{n}}$ for some large enough constant $C_1>0$, we have with probability larger than $1-O(d^{-1})$, 
		$L_n(\bar\eta  )<L_n(\eta ) $ holds for all $\bar\eta \in {\cal B}_\eta(\delta)$. Consequently, by continuity and convexity of the objective function $L_n$, we conclude that with probability tending to 1, 
		the minimizer $\hat{\eta}$ of $L_n$ must lie inside the $L_1$-ball with radius $\delta$, i.e., $|\hat{\eta}(\w)-\eta(\w)|\leq \sum_{i=0}^d|\hat{A}_i-A_i|<\delta$. This proves the theorem. 
	\end{proof}
	
	\subsection{Proof of  Theorem \ref{MCR}}
	\begin{proof}
		By definition we have:
		\begin{align*}
			\hat{D}(\Z, \u) - D(\Z, \u) &=   \left\{  \Z^\top [\hat{\beta}(\u) - \beta(\u) ] - \frac{1}{2}  \hat{\beta}(\u)^\top[ \hat{\mu}_1(\u)+\hat{\mu}_2(\u)   ] \right. \\
			&\left. +\frac{1}{2}   {\beta}(\u)^\top[  {\mu}_1(\u)+  {\mu}_2(\u)   ] + [ \hat{\eta}(\u) - \eta(\u) ] \right\}.
		\end{align*}
		Write $\Z = (Z_1, \cdots, Z_p)^\top$. From Theorem \ref{thmbeta}, we have: $\sup_{\u\in B_\v(r)} |\Z^\top [\hat{\beta}(\u) - \beta(\u) ]|\leq\sup_{\u\in B_\v(r)}  \sum_{i=1}^p  |Z_i[\hat{\beta}(\u) - \beta(\u) ]_i|=O_p\left(s_\u M_r\left(\sqrt{\frac{\log(p+d)}{n}}+\kappa(h)\right)\right)$. Similarly, from Theorems 1 and 3 we have  
		\begin{eqnarray*}
			&&\sup_{\u\in B_\v(r)} |	\hat{\beta}(\u)^\top[ \hat{\mu}_1(\u)+ \hat{\mu}_2(\u) ] -  {\beta}(\u)^\top[  {\mu}_1(\u)+  {\mu}_2(\u) ] |\\
			&=&O_p\left(\Big(s+\sup_{\u\in B_\v(r)} |\beta(\u)|_1\Big)M_r\left(\sqrt{\frac{\log(p+d)}{n}}+\kappa(h)\right)\right).
		\end{eqnarray*}
		Together with Theorem \ref{eta_consistency}, we have: 
		\begin{eqnarray}\label{Dbound}
			&&\sup_{\u\in B_\v(r)} | \hat{D}(\Z; \u) - D(\Z; \u) | \\
			&=& O_p\left(\Big(s+\sup_{\u\in B_\v(r)} |\beta(\u)|_1\Big)M_r\left(\sqrt{\frac{\log(p+d)}{n}}+\kappa(h)\right)+M^2e^M\sqrt{\frac{\log d}{n}}\right). \nonumber
		\end{eqnarray}
		
		Note that 
		\begin{eqnarray*} 
			\hat{R}(1|2)&=&P(\hat{D}(\Z; \u)>0| \Z\in N(\mu_2(\u),\Sigma(\u)) )\\
			&=& P( {D}(\Z; \u)>{D}(\Z; \u)-\hat{D}(\Z; \u)| \Z\in N(\mu_2(\u),\Sigma(\u)) ).
		\end{eqnarray*}
		On the other hand, 
		for a given $\u$, denote the density of $D(\Z; \u) $ as $F_i(\cdot;\u)$ for $\Z$ in class $i=1,2$. Under Conditions (C5) and (C6), we have  $F_i(\cdot;\u)$ is bounded from above. 
		Together with \eqref{Dbound}, we conclude that
		\begin{eqnarray*} 
			&&\hat{R}(1|2)-R(1|2) \\
			&=&O_p\left(\int_{0}^{ \big(s+\sup_{\u\in B_\v(r)} |\beta(\u)|_1\big)M_r\left(\sqrt{\frac{\log(p+d)}{n}}+\kappa(h)\right) +M^2e^M\sqrt{\frac{\log d}{n}}}F_2(z;\u)dz\right)\\
			&=&O_p\left(\Big(s+\sup_{\u\in B_\v(r)} |\beta(\u)|_1\Big)M_r\left(\sqrt{\frac{\log(p+d)}{n}}+\kappa(h)\right)+M^2e^M\sqrt{\frac{\log d}{n}}\right).
		\end{eqnarray*}
		The theorem is proved by establishing a similar bound for $\hat{R}(2|1)-R(2|1)$. 
	\end{proof}
	
	\vskip 0.2in
	\bibliographystyle{plain}
	\bibliography{LDA-MD}
	
\end{document}